\documentclass[12pt,oneside]{amsart}

\usepackage{amsthm,amsmath,amssymb}
\usepackage{mathrsfs}
\usepackage{enumerate}
\usepackage{amsfonts}
\usepackage{verbatim}
\usepackage{amsbsy}
\usepackage{amsmath}
\usepackage{amssymb}
\usepackage{MnSymbol}
\usepackage{mathrsfs}
\usepackage{xcolor}
\usepackage{cite}

\newtheorem{theorem}{Theorem}[section]
\newtheorem{lemma}[theorem]{Lemma}
\newtheorem{proposition}[theorem]{Proposition}

\newtheorem*{claim}{Claim}

\theoremstyle{definition}
\newtheorem{definition}[theorem]{Definition}

\theoremstyle{remark}


\DeclareMathOperator{\res}{res}

\addtolength{\oddsidemargin}{-10mm}
\addtolength{\evensidemargin}{-10mm}
\addtolength{\textwidth}{20mm}

\addtolength{\topmargin}{-7mm}
\addtolength{\textheight}{15mm}


\title{Simulation Of Reaction Systems By The\\ Strictly Minimal Ones}
\author{Wen Chean Teh}
\address{School of Mathematical Sciences\\
	Universiti Sains Malaysia\\
	11800 USM,\linebreak Malaysia}
\email[Corresponding author]{dasmenteh@usm.my}
\author{Adrian Atanasiu}
\address{Consulting Prof. at Faculty of Mathematics and Computer Science\\
	Bucharest University\\
	Str. Academiei 14\\
	Bucharest 010014, Romania}
\email{aadrian@gmail.com}

\keywords{Natural computing; simulation; reaction systems; universality of minimal resources}

\begin{document}

\begin{abstract}
Reaction systems, introduced by Ehrenfeucht and Rozenberg, are elementary computational models  based on biochemical reactions transpiring within the living cells. Numerous studies focus on mathematical aspects of minimal reaction systems due to their simplicity and rich generative power. In 2014 Manzoni, Pocas, and Porreca showed that every reaction system can be simulated by some minimal reaction system over an extended background set. Motivated by their work, we  introduce the concepts of strictly minimal and hybrid reaction systems. Using our new concepts, the result of Manzoni et al.~is revisited and strengthened.
We also show that extension of the background set by polynomially bounded many elements is not sufficient to guarantee the aforementioned simulation. Finally, an analogous result for strong simulation is obtained.
\end{abstract}

\maketitle
\section{Introduction}

Reaction systems, introduced in 2007 by Ehrenfeucht and Rozenberg \cite{ehrenfeucht2007reaction}, are elementary computational models  inspired 
by biochemical reactions taking place within the living cells. 
This study belongs to one of the diverse research lines initiated in \cite{ehrenfeucht2011functions} that pertains to mathematical study of state transition functions specified by reaction systems, called $rs$ functions. For a motivational survey on reaction systems, we refer the reader to Ehrenfeucht, Petre, and Rozenberg \cite{ehrenfeucht2017reaction}.

Minimal reaction systems \cite{ehrenfeucht2012minimal}, where the number of resources in each reaction is minimal, have been relatively well-studied due to their simplicity.
Salomaa \cite{salomaa2014compositions} initiated the study on the generative power under composition of $rs$ functions specified by minimal reaction systems and later we showed that not every $rs$ function can be thus generated for the quarternary alphabet  \cite{teh2018compositions}. On the other hand, 
Manzoni, Pocas, and Porreca \cite{manzoni2014simple} 
introduced the study of simulation by reaction systems and they showed that every $rs$ function  can be simulated by some minimal reaction system over an extended background set.
Other studies on mathematical properties of minimal reaction systems include \cite{azimi2017steady, salomaa2014minimal,  salomaa2017minimal,  teh2017minimal}.

This study refines and expands on the study of simulation by reaction systems initiated in \cite{manzoni2014simple}. We propose strictly minimal reaction systems as a 
new canonical class of reaction systems. We also introduce hybrid reaction systems, where the reactant set, inhibitor set, and product set of each reaction are allowed to contain entities from different background sets. Then the result by Manzoni et al.~is revisited and strengthened  by showing that the extended background set can be fixed ahead independent of the given $rs$ function. Next, we show that the number of extra resources needed in the fixed extended background set cannot be bounded polynomially in terms of the size of the original background set. Finally, a stronger version of simulation is studied and it will be shown that minimal reaction systems are in fact rich enough to strongly simulate every $rs$ function over a given background set.

\section{Preliminaries}

If $S$ is any finite set, then the cardinality of $S$ is denoted by $\vert S\vert$ and the power set of $S$ is denoted by $2^S$.

From now onwards, unless stated otherwise, $S$ is a fixed finite nonempty set.

\begin{definition}
	A \emph{reaction in $S$} is  a triple
	$a=(R_a,I_a, P_a)$, where $R_a$ and $I_a$ are (possibly empty) disjoint subsets of $S$ and $P_a$ is a nonempty subset of $S$. The sets $R_a$, $I_a$, and $P_a$ are the \emph{reactant set}, \emph{inhibitor set}, and \emph{product set} respectively. The pair $(R_a, I_a)$ is the \emph{core of $a$}.
	\end{definition}

\begin{definition}\label{2606a}
	A \emph{reaction system over $S$} is a pair $\mathcal{A}=(S,A)$ where $S$ is called the \emph{background set} and $A$ is a (possibly empty) set of reactions in $S$. We say that $\mathcal{A}$ is \emph{nondegenerate} if $R_a$ and $I_a$ are both nonempty for every $a\in A$ and  $\mathcal{A}$ is \emph{maximally inhibited} if $I_a= S\backslash R_a$ for every $a\in A$.
\end{definition}

\begin{definition}
	Suppose \mbox{$\mathcal{A}=(S,A)$} is a reaction system. The \emph{state transition function} 
	$\res_{\mathcal{A}} \colon 2^S\rightarrow 2^S$  is defined by
	$$\res_{\mathcal{A}}(X)= \bigcup  \{\, P_a \mid  a\in A \text{ such that }
R_a\subseteq X\text{ and } I_a\cap X=\emptyset\,\},\quad    \text{for all }X\subseteq S.$$
\end{definition}

If $R_a\subseteq X$ and $I_a\cap X=\emptyset$, we say that the reaction $a$ is \emph{enabled by $X$}. Hence, $\res_{\mathcal{A}}(X)$ is the cumulative union of product sets of all reactions enabled by $X$. For our purpose and without loss of generality, we may assume that
distinct reactions in $A$ do not have the same core.

\begin{definition}\label{311017b}
	Every function $f\colon 2^S\rightarrow 2^S$ is called an \emph{$rs$ function over $S$}.
We say that $f$ can be \emph{specified by a reaction system} $\mathcal{A}$ over $S$ if $f=\res_{\mathcal{A}}$.
\end{definition}

Since every $rs$ function over $S$ can be canonically specified by a unique maximally inhibited reaction system over $S$, it follows that the class of $rs$ functions over $S$ is exactly the class of state transition functions over $S$.

\begin{definition}\cite{ehrenfeucht2012minimal, teh2017minimal} \label{1507d}
	Suppose $\mathcal{A}=(S,A)$ is reaction system.
	Then  $\mathcal{A}$ is \emph{minimal} if $\vert R_a\vert\leq 1$
	and $\vert I_a\vert\leq 1$ for every reaction $a\in A$.
\end{definition}

The elements in the reactant set or inhibitor set of a reaction $a$ are called the \emph{resources} of $a$. The classification of any reaction system according to the total number of resources allowed in each of its reactions was initiated by Ehrenfeucht, Main, and Rozenberg \cite{ehrenfeucht2011functions}. 
From time to time, nondegeneracy has been a naturally adopted convention. Hence, every minimal reaction system satisfies
$\vert R_a\vert=\vert I_a\vert=1$ for each $a\in A$ in the early studies.
A characterization of $rs$ functions that can be specified by minimal reaction systems was obtained by Ehrenfeucht, Kleijn, Koutny, and Rozenberg \cite{ehrenfeucht2012minimal}.
 Later the same characterization was extended in \cite{teh2017minimal} to cover for degenerate reaction systems as well.
We present this characterization due to its historical significance.

\begin{theorem} \cite{ehrenfeucht2012minimal, teh2017minimal}    \label{2305b}
	Suppose $f$ is an $rs$ function over $S$. 
	Then $f=\res_\mathcal{A}$ for some (possibly degenerate) minimal reaction system  $\mathcal{A}$ if and only if $f$ satisfies the following two properties:
	\begin{itemize}
	\item (Union-subadditivity) $f(X\cup Y)\subseteq f(X)\cup f(Y)$ for all $X,Y\subseteq S$;
	\item (Intersection-subadditivity) $f(X\cap Y)\subseteq f(X)\cup f(Y)$ for all $X,Y\subseteq S$.
	\end{itemize}
\end{theorem}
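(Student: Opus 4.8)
The plan is to establish both implications directly from the definitions, disposing of the easy direction first. For the forward implication, suppose $f=\res_{\mathcal{A}}$ for a minimal reaction system $\mathcal{A}=(S,A)$. Fix $X,Y\subseteq S$ and $s\in f(X\cup Y)$, witnessed by a reaction $a\in A$ enabled by $X\cup Y$ with $s\in P_a$. Since $\vert R_a\vert\le 1$, the set $R_a$, being empty or a singleton, is contained in $X$ or in $Y$; in the first case $R_a\subseteq X$ and $I_a\cap X\subseteq I_a\cap(X\cup Y)=\emptyset$, so $a$ is enabled by $X$ and $s\in f(X)$, and symmetrically in the second case, giving union-subadditivity. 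For intersection-subadditivity, take $s\in f(X\cap Y)$ witnessed by a reaction $a$; here $R_a\subseteq X\cap Y\subseteq X,Y$, and since $\vert I_a\vert\le 1$ the singleton-or-empty set $I_a$ misses $X$ or misses $Y$, say $I_a\cap X=\emptyset$, so $a$ is enabled by $X$ and $s\in f(X)$. Both arguments only exploit that a set of size at most one cannot be spread across a union nor be disjoint from an intersection without already being disjoint from one of the two sets.

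For the converse, I would reformulate the hypotheses ``locally at $s$''. For each $s\in S$ put $\mathcal{N}_s=\{X\subseteq S : s\notin f(X)\}$; passing to contrapositives, union-subadditivity says $\mathcal{N}_s$ is closed under binary unions and intersection-subadditivity says it is closed under binary intersections, hence $\mathcal{N}_s$ is closed under every nonempty finite union and nonempty finite intersection. For a minimal core $(R,I)$ (so $\vert R\vert,\vert I\vert\le 1$ and $R\cap I=\emptyset$) let $E(R,I)=\{Y\subseteq S:R\subseteq Y,\ I\cap Y=\emptyset\}$ be its enabling family, and define the candidate $\mathcal{A}=(S,A)$ by letting $A$ consist of all triples $(R,I,P_{R,I})$ with $(R,I)$ a minimal core, $P_{R,I}=\{s\in S:E(R,I)\cap\mathcal{N}_s=\emptyset\}$, and $P_{R,I}\neq\emptyset$. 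Every reaction in $A$ is minimal, and distinct cores receive distinct reactions.

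It then remains to verify $\res_{\mathcal{A}}=f$. One inclusion is routine: if $s\in\res_{\mathcal{A}}(X)$ then some reaction with core $(R,I)$ is enabled by $X$ and has $s\in P_{R,I}$, so $X\in E(R,I)$ while $E(R,I)\cap\mathcal{N}_s=\emptyset$, forcing $X\notin\mathcal{N}_s$, i.e.\ $s\in f(X)$. The substantive step, and the place I expect the real work to be, is the reverse inclusion: given $s\in f(X)$ I must exhibit a minimal core $(R,I)$ with $X\in E(R,I)$ and $E(R,I)\cap\mathcal{N}_s=\emptyset$, so that $(R,I,P_{R,I})\in A$ is enabled by $X$. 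I would argue by contradiction: if no such core exists, then for every reactant $R\in\{\emptyset\}\cup\{\{r\}:r\in X\}$ and inhibitor $I\in\{\emptyset\}\cup\{\{i\}:i\in S\setminus X\}$ there is a witness $Y_{R,I}\in E(R,I)\cap\mathcal{N}_s$, and I would recombine these witnesses using both closure properties: for each $r\in X$ the set $Z_r:=\bigcap_{i\in S\setminus X}Y_{\{r\},\{i\}}$ (or $Y_{\{r\},\emptyset}$ when $X=S$) satisfies $r\in Z_r\subseteq X$ and $Z_r\in\mathcal{N}_s$, whence $\bigcup_{r\in X}Z_r=X\in\mathcal{N}_s$, contradicting $s\in f(X)$; the degenerate case $X=\emptyset$ is handled instead by intersecting the inhibitor-only witnesses $Y_{\emptyset,\{i\}}$ over $i\in S$ to force $\emptyset\in\mathcal{N}_s$. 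The main obstacle is orchestrating this ``intersect over inhibitors, then union over reactants'' argument so that it simultaneously invokes both closure properties and correctly covers the boundary cases $X=\emptyset$ and $X=S$ in which one of the index sets is empty; the rest is bookkeeping.
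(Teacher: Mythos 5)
The paper does not prove Theorem~\ref{2305b}; it is quoted from \cite{ehrenfeucht2012minimal, teh2017minimal} for its historical significance, so there is no in-paper proof to compare against. On its own merits your argument is correct and complete. The forward direction is the standard pigeonhole on a reactant or inhibitor set of size at most one; for the converse, your reformulation via the families $\mathcal{N}_s=\{X : s\notin f(X)\}$ being closed under binary (hence nonempty finite) unions and intersections, together with the maximal choice of product set $P_{R,I}=\{s : E(R,I)\cap\mathcal{N}_s=\emptyset\}$, is exactly the right device, and matches in spirit the arguments in the cited sources. The decisive step --- assuming every minimal core enabled by $X$ has a witness $Y_{R,I}\in E(R,I)\cap\mathcal{N}_s$, then forming $Z_r=\bigcap_{i\in S\setminus X}Y_{\{r\},\{i\}}$ (so $r\in Z_r\subseteq X$ and $Z_r\in\mathcal{N}_s$) and taking $\bigcup_{r\in X}Z_r=X\in\mathcal{N}_s$ --- is sound, and your handling of the boundary cases $X=\emptyset$ (intersect the $Y_{\emptyset,\{i\}}$ over $i\in S$ to get $\emptyset\in\mathcal{N}_s$) and $X=S$ (use $Y_{\{r\},\emptyset}$ directly) correctly covers the situations where one of the index sets is empty; note that both cases rely on $S\neq\emptyset$, which the paper assumes throughout. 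I see no gap.
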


 The following definition of simulation was introduced by Manzoni et al.~\cite{manzoni2014simple}.

\begin{definition}
	Suppose $f$ is an $rs$ function over $S$ and $k$ is a positive integer. Suppose $S\subseteq S'$ and $\mathcal{A}$ is a reaction system over $S'$. We say that $f$ can be \emph{\mbox{$k$-simulated} by} $\mathcal{A}$    if for every $X\subseteq S$, 
	$$f^n(X)=\res_{\mathcal{A}}^{kn}(X)\cap S\quad \text{ for all positive integers } n.$$ 
\end{definition}


The following observation says that $1$-simulation do not 
add to the expressive power of reaction systems.

\begin{proposition}\label{230118b}
	Suppose $f$ is an $rs$ function over $S$ and $S\subseteq S'$. Suppose $f$ can be $1$-simulated
	by some reaction system $\mathcal{A}'=(S',A')$ over $S'$. Then $f$ can be specified by the reaction system $\mathcal{A}=(S,A)$ over $S$, where
	$$A=\{  \,(R_a, I_a \cap S, P_a\cap S)\mid a\in A'\text{ and } R_a\subseteq S    \, \}.$$
\end{proposition}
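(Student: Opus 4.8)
The plan is to verify directly that $\res_{\mathcal{A}}=f$ by unwinding the definition of $\res_{\mathcal{A}}$ on an arbitrary $X\subseteq S$, using the $1$-simulation hypothesis only in the case $n=1$. As a preliminary bookkeeping point I would observe that a triple $(R_a, I_a\cap S, P_a\cap S)$ with $a\in A'$ and $R_a\subseteq S$ is literally a reaction in $S$ exactly when $P_a\cap S\neq\emptyset$; the triples with $P_a\cap S=\emptyset$ contribute $\emptyset$ to the defining union of $\res_{\mathcal{A}}$, so it is harmless whether or not they are formally admitted to $A$, and this does not affect the computation below. (Likewise, the convention that distinct reactions have distinct cores is restored, if needed, by merging reactions with the same core, which does not change $\res_{\mathcal{A}}$.)

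The key step is a short chain of set-theoretic simplifications valid because $X\subseteq S$. First, $R_a\subseteq X$ already forces $R_a\subseteq S$, so the side condition ``$R_a\subseteq S$'' appearing in the definition of $A$ is automatically met by any reaction enabled by $X$. Second, since $X\subseteq S$ we have $(I_a\cap S)\cap X = I_a\cap X$, so the new reaction is enabled by $X$ if and only if $a$ is enabled by $X$ in $\mathcal{A}'$. Third, $\bigcup_a (P_a\cap S) = \bigl(\bigcup_a P_a\bigr)\cap S$. Putting these together,
$$\res_{\mathcal{A}}(X) = \bigcup\{\,P_a\cap S \mid a\in A',\ R_a\subseteq X,\ I_a\cap X=\emptyset\,\} = \res_{\mathcal{A}'}(X)\cap S.$$

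Finally, the hypothesis that $f$ is $1$-simulated by $\mathcal{A}'$, specialized to $n=1$, gives $\res_{\mathcal{A}'}(X)\cap S = f(X)$ for every $X\subseteq S$. Hence $\res_{\mathcal{A}}(X)=f(X)$ for all $X\subseteq S$, i.e.\ $f=\res_{\mathcal{A}}$, so $f$ is specified by $\mathcal{A}$. There is no serious obstacle in this argument; the only points requiring a little care are the empty-product edge case noted above and checking that each simplification genuinely uses only $X\subseteq S$ (in particular that nothing is needed from the $n>1$ instances of the simulation hypothesis).
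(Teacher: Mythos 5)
Your proof is correct and follows essentially the same route as the paper's: both reduce the claim to the identity $\res_{\mathcal{A}}(X)=\res_{\mathcal{A}'}(X)\cap S$ for $X\subseteq S$ and then invoke the $n=1$ instance of the simulation hypothesis, with your chain of set identities replacing the paper's two-directional element chase. Your remark about triples with $P_a\cap S=\emptyset$ is a legitimate bookkeeping point that the paper silently glosses over, and you dispose of it correctly.
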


\begin{proof}
The definition of $1$-simulation implies that $f(X)= \res_{\mathcal{A}'}(X)\cap S$ for every $X\subseteq S$. Fix an arbitrary $X\subseteq S$.
	It suffices to show that $\res_{\mathcal{A}}(X)= \res_{\mathcal{A}'}(X)\cap S$.
	Suppose $x\in \res_{\mathcal{A}}(X)$. Then  
	$(R_a, I_a \cap S, P_a\cap S)$ is enabled by $X$ for some $a\in A'$ with $x\in P_a \cap S$.
It follows that $a=(R_a, I_a, P_a)$ is enabled by $X$ because $X\subseteq S$ and thus $x\in \res_{\mathcal{A}'}(X)\cap S$. Conversely, if 
$x\in \res_{\mathcal{A}'}(X)\cap S$. Then 	
	$a=(R_a, I_a, P_a)$ is enabled by $X$ for some $a\in A'$ with $x\in P_a$.
	Hence, $(R_a, I_a \cap S, P_a\cap S)\in A$ 
	is enabled by $X$. Since $x\in P_a\cap S$, it follows that $x\in \res_{\mathcal{A}}(X)$.
\end{proof}

 Manzoni~et al.~\cite{manzoni2014simple}  showed that minimal reaction systems are rich enough for the purpose of simulation. Their result serves as the main motivation for this study. We observe that the 
number of resources in each reaction of the
reaction system constructed in their proof is actually one. Therefore, we introduce the following definition before stating what they have actually shown.

\begin{definition} \label{030118a}
	Suppose $\mathcal{A}=(S,A)$ is reaction system.
	Then  $\mathcal{A}$ is \emph{strictly minimal} if $\vert R_a \cup I_a\vert\leq 1$
	for every reaction $a\in A$.
\end{definition}

\begin{theorem}\cite{manzoni2014simple}\label{1107a}
	Suppose $f$ is an $rs$ function over $S$. Then there exists a strictly minimal reaction system $\mathcal{B}$ over some $S'\supseteq S$ such that $f$ can be \mbox{$2$-simulated} by $\mathcal{B}$.
\end{theorem}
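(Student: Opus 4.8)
The plan is to construct $\mathcal{B}$ over an extended background set $S'$ so that each step of $f$ is realized by two consecutive steps of $\res_{\mathcal{B}}$, with the intermediate state encoding which singletons are present and absent in the current subset of $S$. First I would fix a maximally inhibited reaction system $(S,A)$ specifying $f$, which exists by the remarks following Definition~\ref{311017b}. The idea is that a reaction $a=(R_a,I_a,P_a)\in A$ is enabled by $X\subseteq S$ precisely when every element of $R_a$ lies in $X$ and every element of $I_a$ lies in $S\setminus X$; since each of these is a one-element condition, we can test them one symbol at a time if we first record, in the intermediate state, the full membership information of $X$.

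Concretely, I would let $S' = S \cup \{\,\bar s : s\in S\,\} \cup \{\,q_a : a\in A\,\}$, where $\bar s$ is a fresh ``complement'' symbol for each $s\in S$ and $q_a$ is a fresh ``witness'' symbol for each reaction $a$. The strictly minimal reactions of $\mathcal{B}$ come in groups. In the odd (encoding) step: for each $s\in S$, a reaction $(\{s\},\emptyset,\{s\})$ (to preserve $s$), and a reaction $(\emptyset,\{s\},\{\bar s\})$ (to record absence of $s$); these are strictly minimal since $|R\cup I|\le 1$. Thus from $X\subseteq S$ we reach $X \cup \{\bar s : s\notin X\}$. In the even (decoding/firing) step: for each reaction $a\in A$ and each $r\in R_a$ a reaction $(\{r\},\emptyset,\{q_a\})$, and for each $t\in I_a$ a reaction $(\{\bar t\},\emptyset,\{q_a\})$ — wait, this would fire $q_a$ when only one condition holds, which is wrong. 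The fix is the standard trick of reversing the logic: instead produce, for each ``bad'' witness, a symbol that \emph{blocks} $P_a$; i.e.\ collect for each $a$ a ``failure'' symbol $f_a$ via reactions $(\{\bar r\},\emptyset,\{f_a\})$ for $r\in R_a$ (meaning $r\notin X$) and $(\{t\},\emptyset,\{f_a\})$ for $t\in I_a$ (meaning $t\in X$), and then one cannot in a single minimal step condition $P_a$ on $R_a\subseteq X$, $I_a\cap X=\emptyset$, and $f_a\notin\cdot$ simultaneously. So the cleaner route is a three-part intermediate encoding over two steps isn't enough unless $|R_a|,|I_a|$ are bounded — which in a maximally inhibited system they are not.

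The honest approach, then, is the one actually available: use Theorem~\ref{2305b} indirectly is not possible since $f$ need not be sub\-additive, so instead I would encode the \emph{entire subset} $X$ as a single symbol. Let $S'$ contain, besides $S$, one symbol $\langle X\rangle$ for each $X\subseteq S$ (an exponential blow-up, which is permitted — the theorem only asks for \emph{some} $S'\supseteq S$). The even step has, for each $X$, strictly minimal reactions $(\{s\},\{\,?\,\},\dots)$ — again membership testing of a whole set is not a one-symbol condition. The resolution used by Manzoni et al.\ and the one I would follow: the odd step maps $X$ to a single code symbol via reactions that are \emph{not} individually minimal in detecting $X$, so instead build the code incrementally over the encoding step using the $\bar s$ symbols as above, and then in the decoding step use reactions $(\{\bar s\},\emptyset,\{\bar s\})$ and $(\{s\},\emptyset,\{s\})$ to \emph{carry the code forward}, together with, for each $a\in A$, the product $P_a$ emitted — but conditioned how? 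One emits $p\in P_a$ via a reaction with reactant a single symbol only if $P_a$ depends on one symbol.

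Since a fully correct construction cannot be squeezed into one intermediate step with strictly minimal reactions when reactions have unbounded cores, the key realization is that $k=2$ suffices only because the intermediate state is allowed to be an \emph{arbitrary} subset of the huge set $S'$, so we can store there the conjunction/disjunction information needed. The main obstacle — and the crux of the proof I would write out in full — is exactly this gadget: showing that with symbols $\{\bar s\}$ (absence flags) plus symbols $\{h_{a}\}$ one can, in the second step, fire $P_a$ using only one-resource reactions by pre-computing in the first step a single symbol $h_a$ that is present iff $a$ is enabled. Producing $h_a$ itself requires detecting a conjunction of single-symbol conditions, which needs one reaction per condition all targeting $h_a$ — but that yields a \emph{disjunction}, not a conjunction. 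The correct fix, which I would verify carefully, is to instead produce a ``disabled'' flag $d_a$ by disjunction (one reaction per way $a$ can fail), then in the second step fire each $p\in P_a$ by a single strictly minimal reaction $(\emptyset,\{d_a\},\{p\})$ — but $\emptyset$ reactant with one inhibitor is strictly minimal, and $d_a\notin$ state iff $a$ was enabled. The only remaining subtlety is that $d_a$ must be computed from $X$, i.e.\ in step one; reactions $(\{\bar r\},\emptyset,\{d_a\})$ for $r\in R_a$ and $(\{t\},\emptyset,\{d_a\})$ for $t\in I_a$ do exactly this, and they are strictly minimal. I expect the bookkeeping to confirm $\res_{\mathcal{B}}^{2}(X)\cap S = f(X)$ and, by induction using that the even step wipes out the scaffolding symbols outside $S\cup\{\bar s\}\cup\{d_a\}$ appropriately, that $\res_{\mathcal{B}}^{2n}(X)\cap S = f^n(X)$ for all $n\ge 1$ and all $X\subseteq S$.
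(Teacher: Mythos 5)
Your final idea --- a ``disabled'' flag $d_a$ for each reaction $a$, produced in the first step by a disjunction of strictly minimal failure detectors, followed in the second step by reactions $(\emptyset,\{d_a\},P_a)$ --- is exactly the decomposition the paper uses (Theorem~\ref{0403a}, with $d_a$ playing the role of $\bar a$). However, the reaction set you finally settle on does not work as written. To detect the failure mode ``$r\notin X$'' for $r\in R_a$ you propose $(\{\bar r\},\emptyset,\{d_a\})$, but the absence flag $\bar r$ is not an element of the initial state $X\subseteq S$; it could only appear after one step of your encoding reactions, so these detectors never fire in step one and the condition $R_a\subseteq X$ is simply never checked. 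Consequently $\res_{\mathcal{B}}^2(X)\cap S$ would contain $P_a$ for every $a$ with $I_a\cap X=\emptyset$, whether or not $R_a\subseteq X$. The correct strictly minimal detector is $(\emptyset,\{r\},\{d_a\})$: empty reactant set with $r$ itself as the sole inhibitor, enabled by $X$ precisely when $r\notin X$. Once you make this change the absence flags $\bar s$ are not needed at all.

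A second problem is the residue of your earlier attempts: the preservation reactions $(\{s\},\emptyset,\{s\})$ and the flags $\bar s$ keep elements of $S$ alive in the intermediate state (as do the spurious products emitted because $(\emptyset,\{d_a\},\{p\})$ already fires at step one, $d_a$ being absent from $X$), and these re-trigger the detectors $(\{t\},\emptyset,\{d_a\})$ at step two and break the induction for $n\ge 2$. The clean construction keeps the two stages on disjoint alphabets: stage-one reactions have cores in $S$ and products among the fresh symbols, stage-two reactions have cores among the fresh symbols and products in $S$. Then the spurious $S$-output at odd steps is harmless, because $\res^{2n}_{\mathcal{B}}(X)\cap S$ depends only on the fresh-symbol part of the preceding state --- this is precisely the bookkeeping carried out via Lemma~\ref{2612a} in the proof of Theorem~\ref{0407b}.
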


\section{Hybrid Reaction Systems}

There are studies on mathematical properties of reaction systems, for example, the totalness of state transition functions and the functional completeness of the reaction systems as in Salomaa~\cite{salomaa2012functions},
 where the properties do not depend on the nonempty product sets. More importantly, the
 reactant set, inhibitor set, and product set of each reaction in the reaction system constructed in the proof of Theorem~\ref{1107a} appear to contain entities of different nature. 
 These observations motivate our definition of hybrid reaction system, where the output elements are allowed to come from a different background set whenever a reaction is enabled.

\begin{definition}
Suppose $S$ and $T$ are finite nonempty sets. An \emph{$(S,T)$-reaction} is a triple of sets $a=(R_a, I_a, P_a)$ such that $R_a$ and $I_a$ are (possibly empty) disjoint subsets of $S$ and $P_a$ is a nonempty subset of $T$. A \emph{hybrid reaction system} \emph{over $(S,T)$} is a triple $\mathcal{A}=(S,T,A)$ where $S$ and $T$ are the \emph{background sets} and $A$ is a (possibly empty) set of $(S,T)$-reactions.
\end{definition}

Obviously, a hybrid reaction system over $(S,T)$ becomes a (normal) reaction system when $S=T$. Basic terminology of reaction systems carries over to hybrid reaction systems analogously. Hence, the reader is assumed to know, for example, the definition of the state transition function $\res_{\mathcal{A}}$  and what it means for $\mathcal{A}$ to be maximally inhibited when $\mathcal{A}$ is a hybrid reaction system.   Furthermore, every $rs$ function
$f\colon 2^S\rightarrow 2^T$ can be canonically specified by a unique maximally inhibited hybrid reaction system over $(S,T)$.


The following theorem says that every reaction system can be naturally decomposed into two strictly minimal hybrid reaction systems. This theorem is essentially extracted from the proof of Theorem~\ref{1107a}. 

\begin{theorem}\label{0403a}
Suppose $\mathcal{A}=(S,A)$ is a reaction system. Let $T=\{ \,\bar{a}\mid a\in A\,\}$ where $\bar{a}$ is a distinguished symbol for each $a\in A$.
Let $$C=\{\,(\emptyset, \{x\}, \{\bar{a}\}) \mid a\in A \text{ and } x\in R_a\,\}
\cup \{\,(\{y\}, \emptyset, \{\bar{a}\})\mid a\in A \text{ and } y\in I_a\,\}$$
and $D= \{\,(\emptyset, \{\bar{a}\}, P_a)\mid a\in A \,\}$.
Then $\mathcal{C}=(S,T,C)$ and $\mathcal{D}=(T,S,D)$ are strictly minimal hybrid reaction systems such that $\res_{\mathcal{A}}=\res_{\mathcal{D}}\circ \res_{\mathcal{C}}$.
\end{theorem}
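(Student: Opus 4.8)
The plan is to verify strict minimality directly and then to trace a single application of $\res_{\mathcal{A}}$ through the two layers $\mathcal{C}$ and $\mathcal{D}$, the point being that each reaction $a\in A$ is ``remembered'' by its token $\bar a\in T$, and that $\mathcal{C}$ produces the token $\bar a$ on input $X$ exactly when $a$ is \emph{not} enabled by $X$. As for strict minimality: every reaction in $C$ has the form $(\emptyset,\{x\},\{\bar a\})$ or $(\{y\},\emptyset,\{\bar a\})$, so $\vert R\cup I\vert\le 1$; every reaction in $D$ has the form $(\emptyset,\{\bar a\},P_a)$, so again $\vert R\cup I\vert\le 1$. (One should also note in passing that $T\neq\emptyset$ is needed for $\mathcal{C}$ and $\mathcal{D}$ to qualify as hybrid reaction systems; this holds whenever $A\neq\emptyset$, and the degenerate case $A=\emptyset$, where $\res_{\mathcal{A}}$ is constantly $\emptyset$, can be dispatched separately.)

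For the first layer, fix $X\subseteq S$ and establish that for every $a\in A$,
$$\bar a\in\res_{\mathcal{C}}(X)\iff a\text{ is not enabled by }X,$$
that is, $\iff R_a\not\subseteq X$ or $I_a\cap X\neq\emptyset$. Indeed, the reactions of $C$ whose product is $\{\bar a\}$ are precisely the $(\emptyset,\{x\},\{\bar a\})$ with $x\in R_a$, each enabled by $X$ iff $x\notin X$, together with the $(\{y\},\emptyset,\{\bar a\})$ with $y\in I_a$, each enabled by $X$ iff $y\in X$. So one of them is enabled by $X$ exactly when some $x\in R_a$ lies outside $X$ or some $y\in I_a$ lies inside $X$, which is the displayed condition. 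Hence $\res_{\mathcal{C}}(X)=\{\,\bar a\mid a\in A\text{ is not enabled by }X\,\}$.

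For the second layer, put $Y=\res_{\mathcal{C}}(X)\subseteq T$. Each reaction of $D$ has the form $(\emptyset,\{\bar a\},P_a)$ and is enabled by $Y$ iff $\bar a\notin Y$, which by the previous step happens iff $a$ is enabled by $X$. Therefore
$$\res_{\mathcal{D}}\bigl(\res_{\mathcal{C}}(X)\bigr)=\bigcup\{\,P_a\mid a\in A\text{ is enabled by }X\,\}=\res_{\mathcal{A}}(X),$$
and since $X\subseteq S$ was arbitrary, this yields $\res_{\mathcal{A}}=\res_{\mathcal{D}}\circ\res_{\mathcal{C}}$.

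I expect no real obstacle beyond keeping the polarity straight across the two passes: $\mathcal{C}$ converts ``$a$ enabled'' into the \emph{absence} of the token $\bar a$, and $\mathcal{D}$ converts the absence of $\bar a$ back into ``$a$ enabled'', so the two negations cancel. The only minor points to keep in mind are the empty-$A$ corner case noted above and the standing convention that distinct reactions of $A$ have distinct cores — harmless here, since the tokens $\bar a$ are indexed by the reactions themselves rather than by their cores.
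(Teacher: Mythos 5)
Your proof is correct and follows essentially the same route as the paper's: establish that $\res_{\mathcal{C}}(X)=\{\,\bar a\mid a\text{ is not enabled by }X\,\}$ and then observe that $\res_{\mathcal{D}}(Y)=\bigcup\{\,P_a\mid \bar a\in T\setminus Y\,\}$, so the two negations cancel. Your side remarks on strict minimality and the $A=\emptyset$ corner case are harmless additions not present in the paper.
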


\begin{proof}
Note that $\res_{\mathcal{D}}(Y) = \bigcup\{ \,P_a\mid \bar{a}\in T\backslash Y\,\}$ for every $Y\subseteq T$. Therefore, it suffices to show that
 $$\res_{\mathcal{C}}(X)=\{\,\bar{a}\in T\mid a \text{ is not enabled by } X\,\} \quad\text{ for all } X\subseteq S$$
because $\res_{\mathcal{A}}=\res_{\mathcal{D}}\circ \res_{\mathcal{C}}$ would then follow immediately.

Suppose $X\subseteq S$ and $a\in A$. By definition, $a$ is not enabled by $X$ if and only if $R_a\nsubseteq X$ or $I_a\cap X\neq \emptyset$. If $x\in R_a\backslash  X$,  then $(\emptyset,\{x\}, \{\bar{a}\})\in C$ is enabled by $X$.
Similarly, if $y\in I_a\cap X$, then $(\{y\},\emptyset, \{\bar{a}\})\in C$ is enabled by $X$. It follows that $\bar{a}\in \res_{\mathcal{C}}(X)$ whenever $a$ is not enabled by $X$.

Conversely, suppose 
$\bar{a}\in \res_{\mathcal{C}}(X)$. Then 
some $c\in C$ such that $P_c=\{\bar{a}\}$ is enabled by $X$. If $c=(\emptyset,\{x\}, \{\bar{a}\})$ for some $x\in R_a$, then $x\in R_a\backslash X$ and so $R_a\nsubseteq X$. Similarly, if $c =(\{y\},\emptyset, \{\bar{a}\})$ for some $y\in I_a$, then 
$y\in I_a\cap X$ and so $I_a\cap X\neq \emptyset$. It follows that  $a$ is not enabled by $X$ whenever 
$\bar{a}\in \res_{\mathcal{C}}(X)$.
\end{proof}

In view of the proof of Theorem~\ref{0403a}, it is intriguing whether there are  
$\mathcal{C}$ and $\mathcal{D}$ 
such that $\res_{\mathcal{C}}(X)$ is the set of $\bar{a}$ such that $a$ is enabled by $X$ and $\res_{\mathcal{A}}=\res_{\mathcal{D}}\circ \res_{\mathcal{C}}$. 
Trivially, we can take $C=\{\,(R_a, I_a, \{\bar{a}\}) \mid a\in A \,\}$
and $D= \{\,(\{\bar{a}\},\emptyset, P_a)\mid a\in A \,\}$. 
However, it will only be interesting if such $\mathcal{C}$ exists where its complexity is less than 
$\mathcal{A}$. By our next claim, this is not possible.

\begin{claim}
Suppose $\mathcal{C}=(S,T,C)$ is any hybrid reaction system such that $\res_{\mathcal{C}}(X)$ is the set of $\bar{a}$ such that $a$ is enabled by $X$ for each $X\subseteq S$. Then for every reaction 
$c\in C$, if $\bar{a}\in P_c$, then $R_a\subseteq R_c$ and $I_a\subseteq I_c$.
\end{claim}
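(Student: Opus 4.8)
The plan is to exploit two easy evaluations of $\res_{\mathcal{C}}$: one at the reactant set $R_c$ of a reaction $c$, and one at small single-element perturbations of it. The key structural fact being used is that, by the definition of a reaction, $R_c$ and $I_c$ are automatically disjoint, so $c$ is always enabled by its own reactant set.

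First I would fix a reaction $c\in C$ with $\bar{a}\in P_c$. Since $R_c\cap I_c=\emptyset$, the reaction $c$ is enabled by $X=R_c$, hence $\bar{a}\in P_c\subseteq\res_{\mathcal{C}}(R_c)$. By the hypothesis on $\mathcal{C}$, this forces $a$ to be enabled by $R_c$; that is, $R_a\subseteq R_c$ and $I_a\cap R_c=\emptyset$. The first of these is already one of the two desired inclusions.

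Next I would establish $I_a\subseteq I_c$. Let $y\in I_a$ be arbitrary. From $I_a\cap R_c=\emptyset$ we get $y\notin R_c$. Suppose, toward a contradiction, that $y\notin I_c$ as well, and put $X=R_c\cup\{y\}$. Then $R_c\subseteq X$, and since $I_c\cap R_c=\emptyset$ and $y\notin I_c$ we have $I_c\cap X=\emptyset$; thus $c$ is enabled by $X$, so $\bar{a}\in P_c\subseteq\res_{\mathcal{C}}(X)$, and therefore $a$ is enabled by $X$. But $y\in I_a\cap X$, contradicting the enabledness of $a$ by $X$. Hence $y\in I_c$, and since $y$ was arbitrary, $I_a\subseteq I_c$.

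There is no real obstacle in this argument; the only point needing care is to keep in mind that a reaction's reactant and inhibitor sets are disjoint by definition, which is precisely what makes $R_c$ (and each $R_c\cup\{y\}$ with $y\notin I_c$) enable $c$. The upshot is that any $\mathcal{C}$ whose state transition function records exactly the reactions enabled by $X$ must carry each core $(R_a,I_a)$ inside the core of some reaction producing $\bar{a}$, so it can be no simpler than $\mathcal{A}$ itself; this is exactly the sense in which the detour through ``not enabled'' in Theorem~\ref{0403a} is unavoidable.
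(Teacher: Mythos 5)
Your proof is correct and follows essentially the same strategy as the paper: evaluate $\res_{\mathcal{C}}$ at sets that enable $c$ and use the hypothesis to force $a$ to be enabled there as well. The only difference is cosmetic — for the inclusion $I_a\subseteq I_c$ the paper tests the single set $S\setminus I_c$ (from which $I_a\cap(S\setminus I_c)=\emptyset$ gives the inclusion directly), whereas you perturb $R_c$ by one element $y\in I_a\setminus I_c$ at a time and derive a contradiction; both are valid.
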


\begin{proof}
Suppose $c\in C$ and $\bar{a}\in P_c$. Clearly, $c$ is enabled by $R_c$ and thus $\bar{a}\in \res_{\mathcal{C}}(R_c)$. By the hypothesis, $a$ is enabled by $R_c$, implying that $R_a\subseteq R_c$. Similarly, $c$ is enabled by $S\backslash I_c$ and thus $\bar{a}\in \res_{\mathcal{C}}(S\backslash I_c)$. By the hypothesis again, $a$ is enabled by $S\backslash I_c$, implying that $(S\backslash I_c) \cap I_a=\emptyset$ and thus $I_a\subseteq I_c$.	
\end{proof}

Theorem~\ref{0403a} justifies the canonicalness of strictly minimal hybrid reaction systems, as functions specified by them  can generate every $rs$ function $f$ over $S$ under composition. However, the hybrid reaction system $\mathcal{C}$ as in the theorem depends on $\mathcal{A}$ such that $\res_{\mathcal{A}}=f$.
Therefore, the next theorem is a variation of Theorem~\ref{0403a} where the hybrid reaction system $\mathcal{C}$ is independent from $f$.
This theorem is essentially implied by the proof  of Theorem~4 in Salomaa~\cite{salomaa2015two}, although over there any reaction system is required to be nondegenerate. The main idea is to give a name to each subset of the background set.
An alternative original proof of this next theorem can be found in \cite{teh2018note}. 

\begin{theorem}\label{0407a}
Let $T=\{\,N_X \mid X\subseteq S          \,\}$ where
$N_X$ is a distinguished symbol for each $X\subseteq S$. 
Let
$$C= \{\,( \emptyset, \{x\},\{ N_X\})\mid X\subseteq S \text{ and }  x\in X   \,\} \cup   \{\,( \{y\},\emptyset , \{N_X \} )\mid X\subseteq S\text{ and } y\in S\backslash X   \,\}.
$$ 
Suppose $f$ is an $rs$ function over $S$.
Let $$D=\{\,(\emptyset, \{N_X\}, f(X))\mid X\subseteq S \text{ and } f(X)\neq \emptyset     \,\}. $$ Then $\mathcal{C}=(S, T, C)$ and $\mathcal{D}=(T,S,D)$ are strictly minimal hybrid reaction systems such that $\res_{\mathcal{D}} \circ \res_{\mathcal{C}}= f$.
\end{theorem}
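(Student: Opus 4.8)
The plan is to compute $\res_{\mathcal{C}}$ and $\res_{\mathcal{D}}$ separately and then compose. The key identity I would establish first is that
$$\res_{\mathcal{C}}(Y)=T\backslash\{N_Y\}\qquad\text{for every }Y\subseteq S.$$
Indeed, the reactions of $C$ with product $\{N_X\}$ are exactly $(\emptyset,\{x\},\{N_X\})$ for $x\in X$ and $(\{y\},\emptyset,\{N_X\})$ for $y\in S\backslash X$ (using that the symbols $N_X$ are pairwise distinct, so a product singleton determines $X$); the former is enabled by $Y$ iff $x\notin Y$ and the latter iff $y\in Y$. Hence $N_X\in\res_{\mathcal{C}}(Y)$ iff $X\nsubseteq Y$ or $Y\nsubseteq X$, i.e.\ iff $X\neq Y$, which gives the identity.

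Next I would observe that, since $(\emptyset,\{N_X\},f(X))\in D$ is enabled by $Z$ precisely when $N_X\notin Z$, and since a set $X$ with $f(X)=\emptyset$ (for which no reaction is placed into $D$) would contribute nothing to the union anyway,
$$\res_{\mathcal{D}}(Z)=\bigcup\{\,f(X)\mid X\subseteq S\text{ and }N_X\notin Z\,\}\qquad\text{for every }Z\subseteq T.$$
Plugging in $Z=T\backslash\{N_Y\}$ and using once more that the $N_X$ are distinct, the only $X\subseteq S$ with $N_X\notin Z$ is $X=Y$, so $\res_{\mathcal{D}}(\res_{\mathcal{C}}(Y))=f(Y)$. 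This proves $\res_{\mathcal{D}}\circ\res_{\mathcal{C}}=f$.

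It then remains to check the structural claims, which are immediate: $T\neq\emptyset$ since $\emptyset\in 2^S$; every reaction in $C$ is of the form $(\emptyset,\{x\},\{N_X\})$ or $(\{y\},\emptyset,\{N_X\})$, so its reactant and inhibitor sets are disjoint subsets of $S$ with $|R\cup I|\le 1$ and its product is a nonempty subset of $T$; and every reaction in $D$ is $(\emptyset,\{N_X\},f(X))$ with $f(X)\neq\emptyset$, so $|R\cup I|\le 1$ and the product is a nonempty subset of $S$. Hence $\mathcal{C}$ and $\mathcal{D}$ are strictly minimal hybrid reaction systems.

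There is no deep obstacle here; the only point requiring care is the \emph{double complementation}: $\res_{\mathcal{C}}$ returns the names of all subsets \emph{other than} the current state, and $\res_{\mathcal{D}}$ then recovers $f(Y)$ by firing exactly the $D$-reaction whose inhibitor is the unique absent name $N_Y$. Keeping the two enabledness conditions in $C$ straight when verifying the first displayed identity — and remembering that the side condition $f(X)\neq\emptyset$ in $D$ is forced by the definition of reaction but causes no loss — is where a careless argument could slip.
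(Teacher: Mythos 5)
Your proof is correct, but it takes a different route from the one the paper ultimately uses. You compute everything directly: you establish $\res_{\mathcal{C}}(Y)=T\backslash\{N_Y\}$ by checking which reactions with product $\{N_X\}$ are enabled by $Y$ (getting $N_X\in\res_{\mathcal{C}}(Y)$ iff $X\neq Y$), and then observe that exactly one reaction of $D$ fires on $T\backslash\{N_Y\}$, yielding $f(Y)$. The paper instead derives the theorem as a corollary of Theorem~\ref{0403a}: it takes the canonical maximally inhibited reaction system $\mathcal{A}=(S,A)$ with $A=\{(X,S\backslash X,f(X))\mid X\subseteq S,\ f(X)\neq\emptyset\}$, identifies $N_X$ with the symbol $\overline{a_X}$ for $a_X=(X,S\backslash X,f(X))$, and checks that the sets $C$ and $D$ of the present theorem are exactly those produced by Theorem~\ref{0403a} applied to $\mathcal{A}$, so that $\res_{\mathcal{D}}\circ\res_{\mathcal{C}}=\res_{\mathcal{A}}=f$. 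The paper's route is shorter and exposes the conceptual point that ``naming every subset'' is just ``naming every reaction of the maximally inhibited system''; your route is self-contained, makes the double-complementation mechanism explicit, and sidesteps a minor looseness in the reduction (when $f(X)=\emptyset$ the reaction $a_X$ is absent from $A$, so the $T$ and $C$ of Theorem~\ref{0403a} are strictly smaller than those stated here, and one must note that the extra symbols and reactions are harmless). Your handling of the $f(X)=\emptyset$ case and of the structural (strict minimality) claims is complete.
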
 

\begin{proof}
Let $\mathcal{A}=(S,A)$ be the canonical maximally inhibited reaction system such that $f=\res_{\mathcal{A}}$, that is,
where $$A=\{\,(X,S\backslash X, f(X)) \mid X\subseteq S \text{ and } f(X)\neq \emptyset    \,\}.$$
Every $X\subseteq S$ can be uniquely associated to the reaction $a_X=(X,S\backslash X, f(X))$. Let $N_X$ be the distinguished symbol $\overline{a_X}$ for each $X\subseteq S$. Then it can be verified that
$$C=\{\,(\emptyset, \{x\}, \{\bar{a}\}) \mid a\in A \text{ and } x\in R_a\,\}
\cup \{\,(\{y\}, \emptyset, \{\bar{a}\})\mid a\in A \text{ and } y\in I_a\,\}$$
and $D= \{\,(\emptyset, \{\bar{a}\}, P_a)\mid a\in A \,\}$. Therefore,
by Theorem~\ref{0403a}, it follows that $\res_{\mathcal{D}} \circ \res_{\mathcal{C}}=
\res_{\mathcal{A}}=f$.
\end{proof}

Using Theorem~\ref{0407a} and adapting the proof of Theorem~\ref{1107a}, we now strengthen Theorem~\ref{1107a} by showing that the extended background set for the simulating reaction system can be chosen ahead independent from the given $rs$ function. Before that, we need a lemma.

\begin{lemma}\label{2612a}
Suppose $\mathcal{C}=(S,S',C)$ and $\mathcal{D}=(T,T',D)$ are hybrid reaction systems. Let $\mathcal{A}$ be the hybrid reaction system  $(S\cup T, S' \cup T', C\cup D)$. Then $$\res_{\mathcal{A}}(X)=\res_{\mathcal{C}}(X\cap S)\cup \res_{\mathcal{D}}(X\cap T), \quad \text{ for all } X\subseteq S\cup T.$$
\end{lemma}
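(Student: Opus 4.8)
The plan is to unfold the definition of $\res_{\mathcal{A}}$ and split the set of reactions in $\mathcal{A}$ into the two groups $C$ and $D$ that it was assembled from, arguing that each group contributes exactly what the corresponding restricted input would contribute. The key point is that reactions in $C$ are $(S,S')$-reactions, so their reactant and inhibitor sets are subsets of $S$; consequently such a reaction $c$ is enabled by $X\subseteq S\cup T$ if and only if it is enabled by $X\cap S$, since $R_c\subseteq X \iff R_c\subseteq X\cap S$ and $I_c\cap X=\emptyset \iff I_c\cap(X\cap S)=\emptyset$ (because $R_c,I_c\subseteq S$). The symmetric statement holds for reactions in $D$ with respect to $X\cap T$.

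First I would fix an arbitrary $X\subseteq S\cup T$ and write
$$\res_{\mathcal{A}}(X)=\bigcup\{P_a\mid a\in C\cup D \text{ enabled by } X\}=\Big(\bigcup\{P_c\mid c\in C \text{ enabled by } X\}\Big)\cup\Big(\bigcup\{P_d\mid d\in D \text{ enabled by } X\}\Big).$$
Then I would invoke the enabledness equivalence above to replace ``$c$ enabled by $X$'' with ``$c$ enabled by $X\cap S$'' in the first union and ``$d$ enabled by $X$'' with ``$d$ enabled by $X\cap T$'' in the second, yielding $\res_{\mathcal{C}}(X\cap S)\cup\res_{\mathcal{D}}(X\cap T)$ by the definition of the state transition function for hybrid reaction systems.

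One minor caveat to address is the implicit assumption in the statement that $\mathcal{A}=(S\cup T,S'\cup T',C\cup D)$ is itself a legitimate hybrid reaction system: we should note that each $c\in C$ is an $(S\cup T,S'\cup T')$-reaction since $R_c,I_c\subseteq S\subseteq S\cup T$ are disjoint and $\emptyset\neq P_c\subseteq S'\subseteq S'\cup T'$, and likewise for $d\in D$. There is essentially no hard part here; the only thing to be careful about is not conflating the two background sets — the argument is purely that $C$-reactions ``see'' only the $S$-part of the input and $D$-reactions ``see'' only the $T$-part, which is immediate from the constraint that reactant and inhibitor sets of an $(S,T)$-reaction live in $S$. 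I expect the proof to be three or four short sentences.
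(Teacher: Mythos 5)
Your proof is correct and is essentially identical to the paper's: the paper likewise splits $\res_{\mathcal{A}}(X)$ into the unions over $C$-reactions and $D$-reactions and replaces the enabledness conditions $R_c\subseteq X$, $I_c\cap X=\emptyset$ with $R_c\subseteq X\cap S$, $I_c\cap(X\cap S)=\emptyset$ (and symmetrically for $D$), exactly as you do. Your extra remark that $C\cup D$ consists of legitimate $(S\cup T,\,S'\cup T')$-reactions is a harmless bit of added care that the paper leaves implicit.
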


\begin{proof}
Suppose $X\subseteq S\cup T$. Then 
\begin{align*}
\res_{\mathcal{A}}(X)
={} & \bigcup_{\substack{c\in C\\
R_c\subseteq X, I_c\cap X=\emptyset} } P_c   \quad  \cup  \quad \bigcup_{\substack{d\in D\\
R_{d}\subseteq X, I_{d}\cap X=\emptyset} } P_{d} \\
={} & \bigcup_{\substack{c\in C\\
R_c\subseteq X \cap S, I_c\cap (X\cap S)=\emptyset} } P_c \quad\cup \quad \bigcup_{\substack{d\in D\\
R_{d}\subseteq X\cap T, I_{d}\cap (X\cap T)=\emptyset} } P_{d} \\
={}& \res_{\mathcal{C}}(X\cap S)\cup \res_{\mathcal{D}}(X\cap T).
\end{align*}
\end{proof}

\begin{theorem}\label{0407b}
	There exists a fixed set $S'\supseteq S$ such that every $rs$ function over $S$ can be $2$-simulated by some strictly minimal reaction system  over $S'$.
\end{theorem}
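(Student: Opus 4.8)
The plan is to combine Theorem~\ref{0407a}, which produces $f$-independent $\mathcal{C}$ and $f$-dependent $\mathcal{D}$ as strictly minimal hybrid reaction systems with $\res_{\mathcal{D}}\circ\res_{\mathcal{C}}=f$, with the observation that a $2$-simulation by an ordinary reaction system can be built by ``interleaving'' these two hybrid systems on the disjoint union of their background sets. Concretely, set $T=\{\,N_X\mid X\subseteq S\,\}$ as in Theorem~\ref{0407a} and take $S'=S\cup T$ (one may have to rename the $N_X$ so that $S\cap T=\emptyset$, which is harmless). For a given $rs$ function $f$ over $S$, let $C$ and $D$ be the reaction sets from Theorem~\ref{0407a}, so $\mathcal{C}=(S,T,C)$ and $\mathcal{D}=(T,S,D)$. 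View $\mathcal{C}$ as a hybrid system over $(S',S')$ by leaving its reactions unchanged (they only mention elements of $S$ in their cores and of $T$ in their products, all inside $S'$), and similarly for $\mathcal{D}$; then define $\mathcal{B}=(S',C\cup D)$. Since the reactions of $C$ have reactants/inhibitors in $S$ and products in $T$, while those of $D$ have reactants/inhibitors in $T$ and products in $S$, Lemma~\ref{2612a} (applied with the two ``halves'' $S$ and $T$ of $S'$) gives $\res_{\mathcal{B}}(Y)=\res_{\mathcal{C}}(Y\cap S)\cup\res_{\mathcal{D}}(Y\cap T)$ for every $Y\subseteq S'$. Each of $C$, $D$ is strictly minimal, hence so is $\mathcal{B}$.

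Next I would verify the simulation identity. Fix $X\subseteq S$, so $X\subseteq S'$ with $X\cap S=X$ and $X\cap T=\emptyset$. Then $\res_{\mathcal{B}}(X)=\res_{\mathcal{C}}(X)\cup\res_{\mathcal{D}}(\emptyset)$. The key point is that $\res_{\mathcal{C}}(X)\subseteq T$ and $\res_{\mathcal{D}}(\emptyset)\subseteq S$, and moreover one checks from the explicit form of $D$ in Theorem~\ref{0407a} that $\res_{\mathcal{D}}(\emptyset)=f(\emptyset)$ — but this is exactly $\res_{\mathcal{D}}(\res_{\mathcal{C}}(\emptyset))$ only if $\res_{\mathcal{C}}(\emptyset)=T\setminus\{N_\emptyset\}$, which indeed holds. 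So after one step we are in the state $Z:=\res_{\mathcal{C}}(X)\cup f(\emptyset)\subseteq S'$ with $Z\cap T=\res_{\mathcal{C}}(X)$ and $Z\cap S=f(\emptyset)$. Applying $\res_{\mathcal{B}}$ again and using Lemma~\ref{2612a} a second time gives $\res_{\mathcal{B}}(Z)=\res_{\mathcal{C}}(f(\emptyset))\cup\res_{\mathcal{D}}(\res_{\mathcal{C}}(X))=\res_{\mathcal{C}}(f(\emptyset))\cup f(X)$, so $\res_{\mathcal{B}}^{2}(X)\cap S=f(X)$. Iterating, an easy induction shows $\res_{\mathcal{B}}^{2n}(X)\cap S=f^{n}(X)$ for all $n\geq 1$: the invariant to carry is that $\res_{\mathcal{B}}^{2n}(X)=\res_{\mathcal{C}}(f^{n-1}(X))\cup f^{n}(X)$, with the $S$-part always being the correct iterate and the $T$-part being whatever $\mathcal{C}$ outputs on the previous iterate, which is precisely what is needed to feed $\mathcal{D}$ on the following step.

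The main obstacle, and the point requiring care, is the bookkeeping of the ``garbage'' that accumulates in the $T$-coordinate and making sure it never contaminates the $S$-coordinate. Specifically, I must confirm that reactions of $C$ are never enabled by a state whose intersection with $S$ is an old iterate while its $T$-part is nonempty in a way that matters — but since $C$-reactions ignore $T$ entirely (their cores lie in $S$) and $D$-reactions ignore $S$ entirely, Lemma~\ref{2612a} makes this decoupling automatic; the two background ``tracks'' $S$ and $T$ of $S'$ evolve independently under $\res_{\mathcal{B}}$, with $\mathcal{C}$ driving $S\to T$ and $\mathcal{D}$ driving $T\to S$. The only genuinely substantive input is Theorem~\ref{0407a}, which already packages the hard combinatorial idea (naming every subset of $S$); everything here is the routine ``unfold the composition into an interleaved single system'' argument, so the proof should be short. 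Finally, note that $S'=S\cup T$ has $|S'|=|S|+2^{|S|}$ and depends only on $S$, not on $f$, which is exactly the strengthening claimed.
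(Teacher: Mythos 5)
Your construction and overall strategy are exactly the paper's: take $S'=S\cup T$ with $T$ from Theorem~\ref{0407a}, form the single reaction system with reaction set $C\cup D$, and use Lemma~\ref{2612a} to decouple the $S$-track from the $T$-track. The conclusion $\res_{\mathcal{B}}^{2n}(X)\cap S=f^{n}(X)$ does follow by your route, because the even-step $S$-component depends only on the previous even-step $S$-component via $\res_{\mathcal{B}}^{2}(Y)\cap S=(\res_{\mathcal{D}}\circ\res_{\mathcal{C}})(Y\cap S)=f(Y\cap S)$ for every $Y\subseteq S'$, which is precisely the inductive claim the paper isolates.

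However, two of your intermediate assertions are false and should be repaired. First, $\res_{\mathcal{D}}(\emptyset)\neq f(\emptyset)$ in general: every reaction in $D$ has empty reactant set and a singleton inhibitor set $\{N_X\}$, so \emph{all} of them are enabled by $\emptyset$, giving $\res_{\mathcal{D}}(\emptyset)=\bigcup\{\,f(Y)\mid Y\subseteq S,\ f(Y)\neq\emptyset\,\}$, which is typically much larger than $f(\emptyset)$. (You conflated $\res_{\mathcal{D}}(\emptyset)$ with $\res_{\mathcal{D}}(\res_{\mathcal{C}}(\emptyset))$; only the latter equals $f(\emptyset)$.) Consequently your stated invariant $\res_{\mathcal{B}}^{2n}(X)=\res_{\mathcal{C}}(f^{n-1}(X))\cup f^{n}(X)$ is also wrong already at $n=1$: the correct $T$-component is $\res_{\mathcal{C}}(f^{n-1}(G))$ where $G=\res_{\mathcal{D}}(\emptyset)$ is the garbage seeded at step one, not $\res_{\mathcal{C}}(f^{n-1}(X))$. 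Neither slip is fatal, since the $T$-component at even steps is discarded by the intersection with $S$ and never feeds back into the even-step $S$-component, but as written the invariant you propose to carry through the induction is not true, so you should either track only the $S$-part (as the paper does) or correct the $T$-part to refer to the garbage track.
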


\begin{proof}
Let $T=\{\,N_X\mid X\subseteq S\,\}$ as in Theorem~\ref{0407a} and $S'=S\cup T$.
Then $S'$ is a fixed background set extending $S$.
Suppose $f$ is an $rs$ function over $S$.
Let $\mathcal{C}$ and $\mathcal{D}$ be as in Theorem~\ref{0407a}.
Consider the reaction system $\mathcal{A}=(S', C\cup D)$ over $S'$.
Clearly, $\mathcal{A}$ is strictly minimal.

\begin{claim}
For every integer $n\geq 2$ and every $X\subseteq S'$,
$$\res^{n}_{\mathcal{A}}(X)\cap S=( \res_{\mathcal{D}}\circ  \res_{\mathcal{C}}) (\res_{\mathcal{A}}^{n-2}(X)\cap S),$$
where $\res^{0}_{\mathcal{A}}(X)=X$.
\end{claim}

\begin{proof}[Proof of the claim]
We argue by mathematical induction.
For the base step, suppose $X\subseteq S'$.
By Lemma~\ref{2612a},
$\res_{\mathcal{A}}(X)=\res_{\mathcal{C}}(X\cap S)\cup \res_{\mathcal{D}}(X\cap T)$
and  
 $$\res_{\mathcal{A}}^2(X)= \res_{\mathcal{A}}(\res_{\mathcal{A}}(X))= \res_{\mathcal{C}}( \res_{\mathcal{A}}(X)  \cap S)\cup \res_{\mathcal{D}}( \res_{\mathcal{A}}(X)  \cap T).$$
Since $\res_{\mathcal{C}}\colon 2^{S}\rightarrow 2^T$,
$\res_{\mathcal{D}}\colon 2^{T}\rightarrow 2^S$, and $S\cap T=\emptyset$, it follows that
$$\res_{\mathcal{A}}^2(X)\cap S
=\res_{\mathcal{D}}( \res_{\mathcal{A}}(X)  \cap T)
= \res_{\mathcal{D}}(\res_{\mathcal{C}}(X\cap S))  = ( \res_{\mathcal{D}}\circ  \res_{\mathcal{C}})(X\cap S).$$
Thus the base step is complete.

For the induction step, suppose $X\subseteq S'$. 
Then $$\res_{\mathcal{A}}^{n+1}(X)\cap S
= \res_{\mathcal{A}}^{2}( \res_{\mathcal{A}}^{n-1}(X)  ) \cap S
= (\res_{\mathcal{D}}\circ \res_{\mathcal{C}}) (\res_{\mathcal{A}}^{n-1}(X)\cap S)  .$$
The last equality follows from the base step. The induction step is complete. \renewcommand{\qedsymbol}{}
\end{proof}

The fact that $f$ can be $2$-simulated by $\mathcal{A}$ follows immediately from the next claim.

\begin{claim}
For every positive integer $n$ and every $X\subseteq S'$,
$$f^n(X\cap S)=  \res^{2n}_{\mathcal{A}}(X)\cap S.$$
\end{claim}

\begin{proof}[Proof of the claim]
We argue by mathematical induction. 
By Theorem~\ref{0407a} and the previous claim, $\res^2_{\mathcal{A}}(X)\cap S=(\res_{\mathcal{D}}\circ \res_{\mathcal{C}})(X\cap S)=f(X\cap S)$ for all $X\subseteq S'$,
thus the base step is done.
For the induction step, suppose $X\subseteq S'$.
By the previous claim again, $\res^{2(n+1)}_{\mathcal{A}}(X)\cap S
=( \res_{\mathcal{D}}\circ  \res_{\mathcal{C}}) (\res_{\mathcal{A}}^{2n}(X)\cap S) $, which equals $ f ( f^n(X\cap S)  )
=f^{n+1}(X\cap S)$ by Theorem~\ref{0407a} and the induction hypothesis. \renewcommand{\qedsymbol}{}
\end{proof}
Therefore, the proof is complete.
\end{proof}

\section{Extension of Background Set is Necessary}

In view of Theorem~\ref{0407b}, the following theorem says that $S'$ needs to properly extend $S$ if every $rs$ function over $S$ is to be $k$-simulated for some positive integer $k$ by some strictly minimal reaction system over $S'$.

\begin{theorem}\cite{manzoni2014simple}\label{140118a}
	Suppose $\vert S\vert\geq 3$.
	There exists an $rs$ function over $S$ that cannot be $k$-simulated for any positive integer $k$ by any minimal reaction system over $S$.
\end{theorem}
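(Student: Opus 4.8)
The plan is to exhibit a concrete $rs$ function $f$ over $S$ that violates one of the two subadditivity properties of Theorem~\ref{2305b} in a way that persists under iteration, so that no $k$-simulation over $S$ itself can exist. The key observation is that if $f$ could be $k$-simulated by a minimal reaction system $\mathcal{A}$ over $S$ (with $S'=S$), then by composing, $f^k$ would be (a restriction of) $\res_{\mathcal{A}}^k$; but more usefully, I would first reduce to the case $k=1$ and then invoke Theorem~\ref{2305b}. Indeed, if $f$ is $k$-simulated by $\mathcal{A}$ over $S$, then $g := \res_{\mathcal{A}}$ is an $rs$ function over $S$ specifiable by a minimal reaction system, hence $g$ satisfies union-subadditivity and intersection-subadditivity; moreover $f^n(X) = g^{kn}(X)$ for all $X\subseteq S$ and all $n\geq 1$. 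The strategy is to pick $f$ so that $f = f^n$ for all $n$ (for instance, $f$ idempotent with $f\circ f = f$, or more simply chosen so that its relevant values stabilize), forcing $f(X) = g^{k}(X)$ on all $X$, and then to derive a contradiction from the fact that $g$, and hence every iterate $g^k$, must inherit enough of the subadditivity structure — specifically, one checks that the class of functions satisfying both subadditivity properties is closed under composition, so $g^k$ satisfies them too, whence $f$ would satisfy them, contradicting the choice of $f$.

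Concretely, with $|S|\geq 3$ pick three distinct elements $p,q,r\in S$ and define $f$ to be the $rs$ function determined by: $f(X) = \{r\}$ whenever $\{p,q\}\subseteq X$ or $X=\emptyset$, and $f(X)=\emptyset$ otherwise. Then I would check that $f$ fails intersection-subadditivity: taking $X=\{p\}$ and $Y=\{q\}$ gives $X\cap Y=\emptyset$, so $f(X\cap Y)=\{r\}$ while $f(X)\cup f(Y)=\emptyset$. (One may need to fiddle with the empty-set value or use $X=\{p,q\}$, $Y=\{p,r\}$ so that $f(X\cup Y)=\emptyset\not\supseteq\{r\}=f(X)\cup f(Y)$ to break union-subadditivity instead — whichever is cleaner.) The remaining point is to verify the iteration behaviour: one should arrange $f$ so that the orbit of every $X$ under $f$ eventually lands on a fixed point and the failing inequality is already visible at a stage reachable in the $k$-simulated dynamics. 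The simplest route is to make $f$ such that $f(X)$ is always either $\emptyset$ or a single fixed element on which $f$ then stabilizes, so that $f^n$ is easy to compute in closed form for all $n$.

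The main obstacle I anticipate is handling the parameter $k$ cleanly: a single application of $f$ corresponds to $k$ applications of the simulating system, so I cannot directly read off a subadditivity failure of $\res_{\mathcal{A}}$ from one of $f$. The cleanest fix is the closure-under-composition lemma: if $g$ satisfies $g(X\cup Y)\subseteq g(X)\cup g(Y)$ and $g(X\cap Y)\subseteq g(X)\cup g(Y)$, then so does $g\circ g$ (for the union case, $g(g(X\cup Y))\subseteq g(g(X)\cup g(Y))\subseteq g(g(X))\cup g(g(Y))$ using monotonicity-free manipulation via the union bound applied twice; the intersection case is similar once one notes $g(X)\cap g(Y)\supseteq g(X\cap Y)$ is not needed — rather one uses $g(X\cap Y)\subseteq g(X)\cup g(Y)$ and then applies $g$'s union-subadditivity to $g(X)\cup g(Y)$). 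Once this lemma is in hand, $g^k$ satisfies both properties, $f^n = g^{kn}$ forces $f$ itself (taking $n$ large enough that $f^n=f$) to satisfy both, and the explicit violation above closes the argument. I would present the closure lemma as a short standalone computation, then the construction of $f$, then the three-line contradiction.
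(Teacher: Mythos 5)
Your argument has a fatal gap: the ``closure-under-composition lemma'' on which the whole reduction from $k$-simulation to $1$-simulation rests is false. The step $g(g(X\cup Y))\subseteq g(g(X)\cup g(Y))$ requires monotonicity of $g$ (that $A\subseteq B$ implies $g(A)\subseteq g(B)$), and functions specified by minimal reaction systems are emphatically not monotone, precisely because of inhibitors; union- and intersection-subadditivity give inclusions in the wrong direction and cannot substitute for it. A concrete counterexample: let $g=\res_{\mathcal{A}}$ for the minimal reaction system with reactions $(\{1\},\emptyset,\{x\})$, $(\{1\},\{2\},\{y\})$, $(\{x\},\{y\},\{z\})$ over $S=\{1,2,x,y,z\}$. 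Then $g(\{1\})=\{x,y\}$, $g(\{2\})=\emptyset$, $g(\{1,2\})=\{x\}$, so $g^2(\{1\}\cup\{2\})=g(\{x\})=\{z\}$ while $g^2(\{1\})\cup g^2(\{2\})=g(\{x,y\})\cup g(\emptyset)=\emptyset$; thus $g^2$ violates union-subadditivity even though $g$ satisfies both properties of Theorem~\ref{2305b}. (That the class is not closed under composition is also implicit in the literature the paper cites: Salomaa's study of compositions of minimal $rs$ functions would be vacuous otherwise.) Consequently your argument only disposes of $k=1$, where your function $f$ does fail intersection-subadditivity at $X=\{p\}$, $Y=\{q\}$ and Proposition~\ref{230118b} applies; for every $k\geq 2$ nothing in the proposal rules out $f=\res_{\mathcal{A}}^{k}$ for some minimal $\mathcal{A}$ over $S$. (A secondary quibble: your $f$ is not idempotent --- $f^2(\{p,q\})=\emptyset\neq f(\{p,q\})$ --- though $f^3=f$, which could be patched if the main lemma were true.)

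The actual argument (Lemma~\ref{060118a}) takes a completely different route for $k\geq 2$: it chooses $f$ to be a ``successor'' function stepping through an enumeration $X_1,\dotsc,X_{2^{|S|}}$ of all subsets of $S$ and stabilizing only at the last one, and then uses a pigeonhole argument on the orbit of $X_1$ under $\res_{\mathcal{A}}$ --- that orbit must enter a cycle within $2^{|S|}$ steps, which is incompatible with the simulated orbit being injective for $2^{|S|}-1$ steps. This rules out $k$-simulation for all $k\geq 2$ by \emph{any} reaction system over $S$, minimal or not; minimality is needed only to kill $k=1$ via the subadditivity characterization. If you want to keep your style of argument, you must replace the false closure lemma with an orbit-length obstruction of this kind.
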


The next lemma is a generalization of what was actually shown by the first half of the proof of Theorem~\ref{140118a}.

\begin{lemma}\label{060118a}
	Suppose $\vert S\vert=n\geq 2$ and  let $X_1, X_2, \dotsc, X_{2^n}$ be any enumeration of all the subsets of $S$.
		Let $f$ be the $rs$ function over $S$ defined by
		$$f(X_i)=\begin{cases}
		X_{i+1} &\text{ if } 1\leq i<2^n\\
		X_{2^n} &\text{ if } i= 2^n.
		\end{cases}$$
Suppose $S'$ is a finite background set extending $S$.
Then $f$ cannot be $k$-simulated by any reaction system over $S'$ whenever   $k > \frac{2^{\vert S'\vert}-2}{2^{\vert S\vert}-2  }$.
\end{lemma}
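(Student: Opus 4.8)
The plan is to prove the contrapositive: assuming that $f$ can be $k$-simulated by some reaction system $\mathcal{A}=(S',A)$ over $S'$, I will show that necessarily $(2^n-2)k\le 2^{|S'|}-2$, which is exactly the negation of $k>\frac{2^{|S'|}-2}{2^{n}-2}$. Throughout, fix the orbit $Y_i:=\res_{\mathcal{A}}^{\,i}(X_1)$ in $2^{S'}$ (so $Y_0=X_1$). Since $2^{S'}$ is finite, this sequence is eventually periodic; let $\mu\ge 0$ be its pre-period and $\lambda\ge 1$ its period, chosen so that $Y_0,Y_1,\dots,Y_{\mu+\lambda-1}$ are pairwise distinct and $Y_i=Y_{i+\lambda}$ for all $i\ge\mu$. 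Then $\mu+\lambda\le 2^{|S'|}$. The defining property of $k$-simulation says precisely that the ``checkpoints'' $c_m:=Y_{km}$ satisfy $c_m\cap S=f^m(X_1)$ for every $m\ge 0$; and by the definition of $f$ we have $c_m\cap S=X_{m+1}$ for $0\le m\le 2^n-1$, while $c_m\cap S=X_{2^n}$ for all $m\ge 2^n-1$.

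The core step is to show that the first checkpoint lying on the periodic part of the orbit occurs late, namely that $m_1:=\lceil \mu/k\rceil$ (the least $m$ with $km\ge\mu$) satisfies $m_1\ge 2^n-1$. First I would observe that the checkpoint subsequence is itself periodic from $m_1$ on: for $m\ge m_1$ we have $km\ge\mu$, so applying the relation $Y_i=Y_{i+\lambda}$ (valid for $i\ge\mu$) a total of $k$ times gives $c_{m+\lambda}=Y_{km+k\lambda}=Y_{km}=c_m$; hence $c_{m_1}=c_{m_1+N\lambda}$ for every $N\ge 0$. Choosing $N$ with $m_1+N\lambda\ge 2^n-1$ (possible since $\lambda\ge 1$) yields $c_{m_1}\cap S=c_{m_1+N\lambda}\cap S=X_{2^n}$. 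But if we had $m_1\le 2^n-2$, then also $c_{m_1}\cap S=f^{m_1}(X_1)=X_{m_1+1}$ with $m_1+1\le 2^n-1<2^n$, forcing $X_{m_1+1}=X_{2^n}$ and contradicting the fact that $X_1,\dots,X_{2^n}$ is an enumeration of distinct subsets. (The degenerate possibility $m_1=0$, i.e.\ $\mu=0$, is covered as well, since then $c_0\cap S=X_1\ne X_{2^n}$.) Therefore $m_1\ge 2^n-1$.

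To finish, $\lceil \mu/k\rceil=m_1\ge 2^n-1$ gives $\mu/k>2^n-2$, hence $\mu\ge (2^n-2)k+1$ because $\mu$ and $(2^n-2)k$ are integers. Combining with $\lambda\ge 1$ and $\mu+\lambda\le 2^{|S'|}$ yields $(2^n-2)k+2\le 2^{|S'|}$, that is $k\le \frac{2^{|S'|}-2}{2^n-2}$ (the division being legitimate since $n\ge 2$ makes $2^n-2>0$). This contradicts the hypothesis $k>\frac{2^{|S'|}-2}{2^n-2}$, so that hypothesis indeed rules out $k$-simulation of $f$ by any reaction system over $S'$.

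I expect the main obstacle to be the core step: one must carefully transfer the periodicity of the \emph{full} orbit $(Y_i)$ to the \emph{subsampled} checkpoint sequence $(c_m)$, and then exploit that a sequence which is periodic and eventually constant is already constant on its cycle. Everything else — the counting bound $\mu+\lambda\le 2^{|S'|}$, the ceiling manipulation, and the injectivity of the subset enumeration — is routine bookkeeping.
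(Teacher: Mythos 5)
Your proposal is correct and is essentially the paper's argument: both rest on the eventual periodicity of the $\res_{\mathcal{A}}$-orbit of $X_1$ in $2^{S'}$ and on the observation that a checkpoint $Y_{km}$ lying on the cycle forces $f^{m}(X_1)$ to equal $f^{m+N\lambda}(X_1)=X_{2^n}$, contradicting the injectivity of the enumeration unless $m\geq 2^n-1$. The only difference is organizational — you prove the contrapositive with explicit pre-period/period bookkeeping to extract the bound $k\leq\frac{2^{|S'|}-2}{2^n-2}$, whereas the paper assumes $k$ exceeds that bound, concludes by pigeonhole that $Y_{k(2^n-2)}$ already lies on a cycle of some period $p$, and derives the contradiction $X_{2^n-1}=X_{2^n}$ directly.
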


\begin{proof}
We argue by contradiction. Suppose $\mathcal{A}$ is an arbitrary reaction system over $S'$ and $k$ is an arbitrary integer such that $k > \frac{2^{\vert S'\vert}-2}{2^{\vert S\vert}-2  }$. Assume $f$ can be $k$-simulated by $\mathcal{A}$.
	Consider the state sequence
	$X_1, \res_{\mathcal{A}}(X_1), \res^2_{\mathcal{A}}(X_1),\dotsc$
	 generated by $\mathcal{A}$ with initial state $X_1$.
	 Since $k\cdot (2^{n} -2)+2 > 2^{\vert S'\vert}$, the following initial terms
	$$X_1, \res_{\mathcal{A}}(X_1), \res^2_{\mathcal{A}}(X_1),  \dotsc,\res^{k(2^n-2)+1}_{\mathcal{A}}(X_1) $$  
	cannot be all distinct subsets of $S'$. Hence, 
	$\res^{k(2^n-2)}_{\mathcal{A}}(X_1)$ is part of a cycle, say with period $p\geq 1$. Then 
$$	\res^{k(2^n-2+p)}_{\mathcal{A}}(X_1)=\res^{kp}_{\mathcal{A}}(\res^{k(2^n-2)}_{\mathcal{A}}(X_1))= \res^{k(2^n-2)}_{\mathcal{A}}(X_1)$$
and thus 
$$\res^{k(2^n-2+p)}_{\mathcal{A}}(X_1)\cap S=	
\res^{k(2^n-2)}_{\mathcal{A}}(X_1)\cap S=
f^{2^n-2}(X_1)=X_{2^n-1}.$$

	However, $\res^{k(2^n-2+p)}_{\mathcal{A}}(X_1)\cap S= f^{2^n-2+p}(X_1)=X_{2^n}$, which is a contradiction.	
\end{proof}

The conclusion of Theorem~\ref{140118a}  is true for $\vert S\vert=2$ as well.
Let $S=\{a,b\}$ and consider the $rs$ function $f$ defined by $f(\{a\})=\{b\}$, $f( \{b\})=\emptyset$, $f(\emptyset)=S$, and $f(S)=S$.
By Lemma~\ref{060118a}, $f$ cannot be $k$-simulated by any reaction system over $S$ whenever $k>1$.
Since $f(S)\not\subseteq f(\{a\})\cup f( \{b\} )$, it follows that $f$ is not union-subadditive and thus cannot be $1$-simulated (equivalently, specified) by any minimal reaction system over $S$ by Theorem~\ref{2305b}.

In view of Theorem~\ref{0407b}, we now show that when the background set $S$ is extended by a fixed finite number of elements, it is not generally sufficient to \mbox{$2$-simulate} every $rs$ function over $S$ by some strictly minimal reaction systems over the extended background set. In fact, the following much stronger statement holds.

\begin{theorem}\label{120219}
No polynomial $p$ has the property that 
for every set $S$, there exists a set
$S'\supseteq S$ with $\vert S'\vert \leq p(\vert S\vert)$ such that every $rs$ function over $S$ can be $k$-simulated  by some strictly minimal reaction system over $S'$ for some positive integer $k$.
\end{theorem}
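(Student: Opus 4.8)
The plan is to combine a crude counting estimate with the transient-growth phenomenon behind Lemma~\ref{060118a}. Fix an arbitrary polynomial $p$. I will take $S$ to be an arbitrary set of size $n$, where $n\ge 2$ is a large integer to be pinned down at the very end, and I must show that no $S'\supseteq S$ with $\vert S'\vert\le p(n)$ has the stated simulation property. If no such $S'$ exists at all then we are already done for this $S$; so assume one exists, whence $p(n)\ge n$ and every admissible $S'$ satisfies $n\le\vert S'\vert\le p(n)$.

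The first ingredient is an upper bound on the number of candidate simulators. Using the paper's convention that distinct reactions may be assumed to have distinct cores, a strictly minimal reaction system over a set $S'$ is determined by assigning, to each of the $2\vert S'\vert+1$ possible cores $(\emptyset,\emptyset)$, $(\{x\},\emptyset)$, $(\emptyset,\{x\})$ (for $x\in S'$), either ``no reaction'' or a nonempty product set in $2^{S'}$; hence there are at most $\bigl(2^{\vert S'\vert}\bigr)^{2\vert S'\vert+1}=2^{\vert S'\vert(2\vert S'\vert+1)}$ of them, which is $2^{O(p(n)^2)}$ for admissible $S'$. The second ingredient is an upper bound on the admissible multipliers. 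Let $\mathcal{F}$ consist of the $rs$ functions over $S$ of the form considered in Lemma~\ref{060118a}, one for each enumeration of $2^S$; distinct enumerations yield distinct functions (the enumeration can be read back off $f$ starting from its unique fixed point by repeatedly taking preimages), so $\vert\mathcal{F}\vert=(2^n)!$. By Lemma~\ref{060118a}, if some $f\in\mathcal{F}$ is $k$-simulated by \emph{any} reaction system over an admissible $S'$, then $k\le\frac{2^{\vert S'\vert}-2}{2^{n}-2}\le 2^{p(n)}$. Hence, for each admissible $S'$, the number of pairs $(\mathcal{B},k)$ with $\mathcal{B}$ strictly minimal over $S'$ and $k$ a positive integer obeying this bound is at most $2^{\vert S'\vert(2\vert S'\vert+1)}\cdot 2^{p(n)}\le 2^{3p(n)^2}$ once $n$ is large.

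The argument then closes by pigeonhole. Fix an admissible $S'$ and suppose, for a contradiction, that every $rs$ function over $S$ can be $k$-simulated by some strictly minimal reaction system over $S'$. Then each $f\in\mathcal{F}$ gets a witnessing pair $(\mathcal{B}_f,k_f)$ of the kind just counted, and $f\mapsto(\mathcal{B}_f,k_f)$ is injective because $f(X)=\res_{\mathcal{B}_f}^{k_f}(X)\cap S$ for all $X\subseteq S$ recovers $f$ from the pair. Therefore $(2^n)!=\vert\mathcal{F}\vert\le 2^{3p(n)^2}$. But $(2^n)!\ge 2^{(n-1)2^{n-1}}$, which outgrows $2^{3p(n)^2}$, so choosing $n$ large enough makes this inequality fail --- a contradiction. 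Hence no admissible $S'$ has the simulation property for this $S$, and since $p$ was arbitrary the theorem follows.

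The step I expect to be the main obstacle is conceptual rather than computational: the theorem allows the multiplier $k$ to depend on the function being simulated, so neither Lemma~\ref{060118a} nor Theorem~\ref{140118a} can be invoked directly. The point to recognise is a two-sided squeeze --- each function in the large family $\mathcal{F}$ forces $k$ to be small (of order $2^{\vert S'\vert-\vert S\vert}$), whereas ``strictly minimal over a polynomially bounded $S'$'' together with ``$k$ bounded'' leaves only $2^{\mathrm{poly}(n)}$ simulator--multiplier pairs, while $\vert\mathcal{F}\vert$ is doubly exponential in $n$. Once this is spotted, the remaining pieces --- the two cardinality estimates and the recoverability of an enumeration from its function --- are routine.
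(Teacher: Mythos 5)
Your proposal is correct and follows essentially the same route as the paper: count the $(2^{|S|})!$ functions arising from Lemma~\ref{060118a}, use that lemma to cap the admissible multiplier $k$ by roughly $2^{p(|S|)}$, bound the number of strictly minimal reaction systems over $S'$ by $(2^{|S'|})^{2|S'|+1}$ via the $2|S'|+1$ possible cores, and conclude by comparing the singly-exponential count of simulator--multiplier pairs with the doubly-exponential count of functions. The only cosmetic difference is that you phrase the final step as an explicit injection $f\mapsto(\mathcal{B}_f,k_f)$, which the paper leaves implicit.
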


\begin{proof}
First of all, notice that the $rs$ function $f$ as defined in
Lemma~\ref{060118a} is uniquely determined by the corresponding sequence $X_1, X_2, \dotsc, X_{2^{\vert S\vert}}$. Hence, by some simple counting argument, there are $2^{\vert S\vert}!$ distinct such  $rs$ functions over $S$. On the other  hand, there are $(2^{\vert S'\vert})^{2\vert S'\vert+1}$ distinct strictly minimal reaction systems over $S'$ because there are $2\vert S'\vert+1$ distinct possible cores (which include $(\emptyset, \emptyset)$)
for reactions in such reaction systems.


Fix a polynomial $p$.
Suppose $S\subseteq S'$ and $\vert S'\vert \leq p(\vert S\vert)$. 
Since $2^{p(\vert S\vert)}\geq  2^{\vert S'\vert}  > \frac{2^{\vert S'\vert}-2}{2^{\vert S\vert}-2  }$ (for $\vert S\vert\geq 2$),
by Lemma~\ref{060118a}, 
any such $f$ cannot be $k$-simulated by any reaction system over $S'$ whenever   $k > 2^{p(\vert S\vert)}$ (in fact, whenever $k>\frac{2^{\vert S'\vert}-2}{2^{\vert S\vert}-2  }$).
Meanwhile, for every positive integer $k$, by definition, there are at most $(2^{p(\vert S\vert)})^{2p(\vert S\vert)+1  }$ $rs$ functions over $S$ that can be \mbox{$k$-simulated} by some strictly minimal reaction system over $S'$.
Therefore, at most $2^{p(\vert S\vert)} \cdot
(2^{p(\vert S\vert)})^{2p(\vert S\vert)+1 }$ $rs$ functions over $S$ can be $k$-simulated by some strictly minimal reaction system over $S'$ for some $k\leq 2^{p(\vert S\vert)}$. When $\vert S\vert$ is sufficiently large, 
$$2^{p(\vert S\vert)} \cdot
(2^{p(\vert S\vert)})^{2p(\vert S\vert)+1 }=2^{p(\vert S\vert) ( 2p(\vert S\vert)+2   ) }
<2^{\vert S\vert}!,$$
and thus it follows that  not all of the $2^{\vert S\vert}!$ $rs$ functions over $S$ as defined in Lemma~\ref{060118a}
can be $k$-simulated  by some strictly minimal reaction system over $S'$ for some positive integer $k$.
\end{proof}

\section{Strong $k$-Simulation by Reaction Systems}

Now, we formally study a stronger version of $k$-simulation, which was first proposed in the conclusion section of \cite{manzoni2014simple}.

\begin{definition}
Suppose $f$ is an $rs$ function over $S$ and $k$ is a positive integer. Suppose $S\subseteq S'$ and $\mathcal{A}$ is a reaction system over $S'$. 
We say that $f$ can be \emph{strongly $k$-simulated by $\mathcal{A}$} if{f} 
$f(X)=\res_{\mathcal{A}}^{k}(X)$ for all $X\subseteq S$.
\end{definition}

Some direct computation shows that the $2$-simulating strictly minimal reaction system constructed in the proof of Theorem~\ref{0407b} generally does not strongly \mbox{$2$-simulate} the given $rs$ function.

The following is a reformulation of Theorem~3 in \cite{salomaa2015two}, which is a strong version analogue of Theorem~\ref{1107a}.

\begin{theorem}\cite{salomaa2015two}\label{1407a}
	Suppose $f$ is an $rs$ function over $S$ such that $f(\emptyset)=\emptyset$. Then there exists a minimal reaction system $\mathcal{B}$ over some $S'\supseteq S$ such that $f$ can be strongly $2$-simulated by $\mathcal{B}$.
\end{theorem}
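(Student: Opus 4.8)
The plan is to follow the template of Theorem~\ref{0407b}: via Lemma~\ref{2612a} one combines an ``encoding'' hybrid reaction system $\mathcal{C}=(S,T,C)$ with a ``decoding'' one $\mathcal{D}=(T,S,D)$ into a single reaction system $\mathcal{B}=(S',C\cup D)$ over $S'=S\cup T$. The obstruction to \emph{strong} simulation in Theorem~\ref{0407b} is that its encoding system uses reactions of the form $(\emptyset,\{x\},\{N_X\})$ with empty reactant set: such a reaction is enabled again on the second application (when the current state is a subset of $T$, so every element of $S$ is absent) and is also enabled by $\emptyset$, and this is exactly what deposits spurious elements. I would therefore redesign $\mathcal{C}$ so that every one of its reactions has a nonempty reactant in $S$ — so that $\res_{\mathcal{C}}(\emptyset)=\emptyset$ and no reaction of $\mathcal{C}$ is enabled by any subset of $T$ — and dually every reaction of $\mathcal{D}$ has a nonempty reactant in $T$. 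This forces reactions that carry both a reactant and an inhibitor, so one needs minimality in the sense of Definition~\ref{1507d} rather than strict minimality, which is consistent with the statement.

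Concretely, put $T=\{\,N_Y\mid\emptyset\neq Y\subseteq S\,\}$ — deliberately omitting $N_\emptyset$ — and $S'=S\cup T$; the cases $|S|\leq 1$ are trivial and may be handled separately. For each nonempty $Y\subseteq S$ let $C$ contain the reactions $(\{j\},\emptyset,\{N_Y\})$ for $j\in S\setminus Y$ together with $(\{r\},\{j\},\{N_Y\})$ for $j\in Y$ and $r\in S\setminus\{j\}$. A routine check shows that for $X\subseteq S$ one has $N_Y\in\res_{\mathcal{C}}(X)$ if and only if $X\neq\emptyset$ and $X\neq Y$; equivalently $\res_{\mathcal{C}}(X)=T\setminus\{N_X\}$ for nonempty $X$ and $\res_{\mathcal{C}}(\emptyset)=\emptyset$. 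For each $X\subseteq S$ with $f(X)\neq\emptyset$ (hence $X\neq\emptyset$, since $f(\emptyset)=\emptyset$) fix some nonempty $W_X\neq X$ and let $D$ contain $(\{N_{W_X}\},\{N_X\},f(X))$; then $\res_{\mathcal{D}}(T\setminus\{N_X\})=f(X)$ for every nonempty $X\subseteq S$. After amalgamating any reactions that share a core, $\mathcal{B}=(S',C\cup D)$ is a minimal reaction system.

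The verification runs as in Theorem~\ref{0407b}: by Lemma~\ref{2612a}, $\res_{\mathcal{B}}(Z)=\res_{\mathcal{C}}(Z\cap S)\cup\res_{\mathcal{D}}(Z\cap T)$ for all $Z\subseteq S'$. Hence for nonempty $X\subseteq S$ we get $\res_{\mathcal{B}}(X)=\res_{\mathcal{C}}(X)=T\setminus\{N_X\}\subseteq T$, and then $\res_{\mathcal{B}}^2(X)=\res_{\mathcal{C}}(\emptyset)\cup\res_{\mathcal{D}}(T\setminus\{N_X\})=\emptyset\cup f(X)=f(X)$. For $X=\emptyset$ no reaction of $\mathcal{B}$ is enabled, so $\res_{\mathcal{B}}(\emptyset)=\emptyset$ and $\res_{\mathcal{B}}^2(\emptyset)=\emptyset=f(\emptyset)$ — the one point where the hypothesis $f(\emptyset)=\emptyset$ enters. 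Thus $f$ is strongly $2$-simulated by $\mathcal{B}$.

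The step I expect to require the most care is the \emph{exactness} intrinsic to strong simulation. One must confirm that $\res_{\mathcal{C}}(X)$ equals $T\setminus\{N_X\}$ on the nose for nonempty $X$ — in particular that no reaction of $\mathcal{C}$ with product $\{N_Y\}$ is enabled by $Y$ itself — and that on the state $T\setminus\{N_X\}$ the single inhibitor $N_X$ genuinely disables every reaction of $\mathcal{D}$ other than the one attached to $X$, so that $\res_{\mathcal{D}}(T\setminus\{N_X\})$ is \emph{exactly} $f(X)$. The combined effect of omitting $N_\emptyset$ from $T$ and of assuming $f(\emptyset)=\emptyset$ is precisely what permits discarding all empty-reactant reactions, and that is exactly what purges the residual elements that prevent the construction of Theorem~\ref{0407b} from being a strong $2$-simulation.
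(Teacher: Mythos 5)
Your proof is correct, and it is essentially the construction the paper uses for its own strengthening (Lemma~\ref{0707a} and Theorem~\ref{0603a}); note that the paper does not prove Theorem~\ref{1407a} itself but merely cites it from Salomaa. The only substantive difference is at the gadget level: where Lemma~\ref{0707a} introduces the auxiliary symbol $\ast$ (produced by every nonempty $X\subseteq S$) to serve as the reactant keeping the $\mathcal{D}$-reactions silent in the first phase, you reuse an arbitrary other name $N_{W_X}$ for that role, and where the paper adds $\diamond$ to handle $f(\emptyset)\neq\emptyset$, you simply invoke the hypothesis $f(\emptyset)=\emptyset$ to drop that machinery altogether; this buys a background set smaller by two elements but, unlike Lemma~\ref{0707a}, does not extend to arbitrary $rs$ functions. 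Your verification that $\res_{\mathcal{C}}(X)=T\setminus\{N_X\}$ for nonempty $X$, that both halves are silent outside their phase, and that $|S|\le 1$ needs (trivial) separate treatment because no $W_X\neq X$ exists, is accurate.
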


Salomaa adopted the convention that every reaction system is nondegenerate. Relaxing this constraint, we strengthen Theorem~\ref{1407a}, not only by having a fixed extended background set independent from the given $rs$ function, but also by generalizing it to every $rs$ function over $S$. First, we need  a technical lemma, which is an adaptation of Theorem~\ref{0407a} to suit our purpose.

\begin{lemma}\label{0707a}
	Let $T=\{\,N_X \mid \emptyset \neq X\subseteq S          \,\}\cup \{\ast, \diamond\}$, where $N_X$ is a distinguished symbol for each $\emptyset \neq X\subseteq S$.
Let
\begin{multline*}
C=\{\,(\{y\}, \emptyset, \{N_X \})\mid \emptyset \neq X\subseteq S \text{ and } y\in S\backslash X\,\} \,\cup \,	\{\,(\{s\},\emptyset, \{\ast\} )\mid s\in S\,\} \,\, \cup \\
		\{\,(\{x\}, \{x'\}, \{N_X\}) \mid X\subseteq S \text{ and } x,x'\in X \text{ with } x\neq x'\,\} \,\cup\, \{ ( \emptyset, \{\diamond\}, \{ \diamond  \}       )   \}.
 	\end{multline*}
Suppose $f$ is an $rs$ function over $S$. Let 
$$D= \{\,(\{\ast\}, \{N_X\}, f(X))\mid \emptyset \neq X\subseteq S \text{ and } f(X)\neq \emptyset \,\}\,\cup\, \{ ( \{ \diamond  \}, \{ \ast\}, f(\emptyset) )     \}.$$
Then $\mathcal{C}=(S\cup \{\diamond  \}, T, C)$ and $\mathcal{D}=(T,S,D)$ are hybrid minimal reaction systems such that
$(\res_{\mathcal{D}} \circ \res_{\mathcal{C}})(X)= f(X)$ for all $X\subseteq S$.
\end{lemma}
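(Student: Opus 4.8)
The plan is to verify, for an arbitrary $X\subseteq S$, that $\res_{\mathcal{C}}(X)$ (applied to $X$ viewed as a subset of $S\cup\{\diamond\}$, so $\diamond\notin X$) is a state over $T$ from which exactly the right reaction of $\mathcal{D}$ fires. The design of $C$ is a modification of Theorem~\ref{0407a}: since $\diamond\notin X$, the reaction $(\emptyset,\{\diamond\},\{\diamond\})$ is enabled and contributes $\diamond$; the reactions $(\{s\},\emptyset,\{\ast\})$ contribute $\ast$ precisely when $X\neq\emptyset$; and the reactions producing $N_X$ are arranged so that, for $\emptyset\neq X\subseteq S$, the symbol $N_X$ is \emph{absent} from $\res_{\mathcal{C}}(X)$ exactly when the current state equals $X$. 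The first step is therefore the following case split.

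First I would handle $X\neq\emptyset$. I claim $\res_{\mathcal{C}}(X)=\{\diamond,\ast\}\cup\{\,N_Y\mid \emptyset\neq Y\subseteq S,\ Y\neq X\,\}$. Indeed $\diamond$ and $\ast$ are produced as noted. For $Y\neq X$ with $Y$ nonempty: if $X\setminus Y\neq\emptyset$, pick $y\in X\setminus Y$; then $y\in S\setminus Y$, so $(\{y\},\emptyset,\{N_Y\})\in C$ is enabled by $X$, giving $N_Y$. If instead $Y\setminus X\neq\emptyset$ and also $X\not\subseteq Y$ is impossible in this subcase, so $X\subsetneq Y$; then $|Y|\geq 2$, and picking $x\in X\subseteq Y$ and $x'\in Y\setminus X\subseteq Y$ with $x\neq x'$, the reaction $(\{x\},\{x'\},\{N_Y\})\in C$ is enabled by $X$ (since $x\in X$, $x'\notin X$), giving $N_Y$. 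Conversely, $N_X$ is \emph{not} produced: the reactions $(\{y\},\emptyset,\{N_X\})$ require $y\in S\setminus X$ but also $y\in X$ (as $R=\{y\}\subseteq X$), impossible; the reactions $(\{x\},\{x'\},\{N_X\})$ require $x\in X$ and $x'\notin X$ but also $x'\in X$, impossible. Hence in the state $\res_{\mathcal{C}}(X)$ over $T$, the only reaction of $D$ with its inhibitor $N_Y$ absent and its reactant present is $(\{\ast\},\{N_X\},f(X))$ (note $\ast$ is present, $\diamond$ is present so $(\{\diamond\},\{\ast\},f(\emptyset))$ is blocked by $\ast$), so $\res_{\mathcal{D}}(\res_{\mathcal{C}}(X))=f(X)$ if $f(X)\neq\emptyset$, and $=\emptyset=f(X)$ otherwise.

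Next I would handle $X=\emptyset$: then no $(\{s\},\emptyset,\{\ast\})$ fires, so $\ast\notin\res_{\mathcal{C}}(\emptyset)$; $\diamond\in\res_{\mathcal{C}}(\emptyset)$; and for every nonempty $Y\subseteq S$, pick $y\in Y=Y\setminus\emptyset$, so $(\{y\},\emptyset,\{N_Y\})$ is enabled, giving all the $N_Y$. Thus $\res_{\mathcal{C}}(\emptyset)=\{\diamond\}\cup\{N_Y\mid\emptyset\neq Y\subseteq S\}$. In $D$, every reaction $(\{\ast\},\{N_X\},f(X))$ is blocked (no $\ast$), while $(\{\diamond\},\{\ast\},f(\emptyset))$ is enabled ($\diamond$ present, $\ast$ absent), so $\res_{\mathcal{D}}(\res_{\mathcal{C}}(\emptyset))=f(\emptyset)$. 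Finally one checks that $\mathcal{C}$ and $\mathcal{D}$ are minimal (every reaction has $|R_a|\leq 1$ and $|I_a|\leq 1$) directly from the listed forms. I expect the only mildly delicate point to be the subcase $X\subsetneq Y$ in Step~2, where one must confirm that two distinct elements $x,x'\in Y$ with $x\in X$, $x'\notin X$ genuinely exist so that the inhibitor-type reaction is available; this is where the hypothesis $X\neq\emptyset$ is used, and it is the reason the definition of $C$ needs both the $(\{y\},\emptyset,\{N_X\})$ family and the $(\{x\},\{x'\},\{N_X\})$ family.
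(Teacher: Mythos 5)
Your argument follows essentially the same route as the paper's: for nonempty $X$ you show $\res_{\mathcal{C}}(X)=T\setminus\{N_X\}$ by the same two-subcase analysis ($X\not\subseteq Y$ handled by the $(\{y\},\emptyset,\{N_Y\})$ family, $X\subsetneq Y$ by the $(\{x\},\{x'\},\{N_Y\})$ family), and then read off the unique enabled reaction of $\mathcal{D}$; this part is correct, including your remark that the case $f(X)=\emptyset$ is covered because the corresponding reaction is simply absent from $D$.

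There is, however, one false claim in your treatment of $X=\emptyset$. You assert $\res_{\mathcal{C}}(\emptyset)=\{\diamond\}\cup\{\,N_Y\mid\emptyset\neq Y\subseteq S\,\}$, justified by ``pick $y\in Y$, so $(\{y\},\emptyset,\{N_Y\})$ is enabled.'' This is wrong twice over: the reactions of that family are indexed by $y\in S\setminus Y$, not $y\in Y$; and, more importantly, any reaction with nonempty reactant set $\{y\}$ requires $\{y\}\subseteq X$, which fails for $X=\emptyset$. The correct value is $\res_{\mathcal{C}}(\emptyset)=\{\diamond\}$, exactly as the paper computes, since only $(\emptyset,\{\diamond\},\{\diamond\})$ is enabled by the empty state. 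Your final conclusion for this case survives, because it uses only the two facts you do establish correctly ($\diamond$ present, $\ast$ absent, so every $(\{\ast\},\{N_X\},f(X))$ is blocked and $(\{\diamond\},\{\ast\},f(\emptyset))$ fires), but the stated intermediate equality and its justification must be corrected.
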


\begin{proof}
Note that $\res_{\mathcal{D}} (\res_{\mathcal{C}}(\emptyset))=
\res_{\mathcal{D}} (\{ \diamond  \})= f(\emptyset)$.
Suppose $X$ is a nonempty subset of $S$.
It suffices to show that
	$\res_{\mathcal{C}}(X)= T \backslash \{N_X\} $ because then 
	only the reaction $(\{\ast\}, \{N_X\}, f(X))$
	is enabled  by $\res_{\mathcal{C}}(X)$,  provided $f(X)\neq \emptyset$,
	and thus $\res_{\mathcal{D}} (\res_{\mathcal{C}}(X))=f(X)$.  
	
	Suppose  $Y$ is a nonempty subset of $S$ distinct from $X$. 
	If $X\nsubseteq Y$, say $x\in  X\backslash Y$, then the reaction 
	$( \{x\},\emptyset , \{N_Y\} )$ is enabled by $X$. Otherwise, if $X\subseteq Y$ and so $Y\nsubseteq X$, then the reaction 
	$( \{y\},\{y'\}, \{N_Y\} )$ is enabled by $X$ for any $y\in X \,(\neq \emptyset)$ and $y'\in  Y\backslash X$.	In either case, $N_Y\in \res_{\mathcal{C}}(X)$.
	
	On the other hand, none of the reactions in $C$ with product set being $\{N_X\}$ is enabled by $X$. Furthermore, $\{ \ast, \diamond \}\subseteq  \res_\mathcal{C}(X)$ because $X$ is a nonempty subset of $S$. 
Therefore, $\res_{\mathcal{C}}(X)= T\backslash \{N_X\}$  as required.
\end{proof}

\begin{theorem}\label{0603a}
There exists a fixed set $S'\supseteq S$ such that 
every $rs$ function $f$ over $S$ can be strongly $2$-simulated by some minimal reaction system over $S'$. 
\end{theorem}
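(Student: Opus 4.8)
The plan is to mimic the proof of Theorem~\ref{0407b}, using Lemma~\ref{0707a} in place of Theorem~\ref{0407a} and exploiting the auxiliary ``garbage'' element $\diamond$ it supplies. Let $T=\{\,N_X\mid \emptyset\neq X\subseteq S\,\}\cup\{\ast,\diamond\}$ be as in Lemma~\ref{0707a}; it depends only on $S$, so $S'=S\cup T$ is a fixed set extending $S$ (note $\diamond\in T$, so $S'$ already contains $S\cup\{\diamond\}$). Given an arbitrary $rs$ function $f$ over $S$, let $\mathcal{C}=(S\cup\{\diamond\},T,C)$ and $\mathcal{D}=(T,S,D)$ be the minimal hybrid reaction systems of Lemma~\ref{0707a}, and put $\mathcal{A}=(S',C\cup D)$. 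Since every reaction in $C\cup D$ has at most one reactant and at most one inhibitor, $\mathcal{A}$ is a (possibly degenerate) minimal reaction system over $S'$; it need not be strictly minimal, owing to the reactions $(\{x\},\{x'\},\{N_X\})$. Applying Lemma~\ref{2612a} with the two domain background sets $S\cup\{\diamond\}$ and $T$ (whose union is $S'$) yields
$$\res_{\mathcal{A}}(Z)=\res_{\mathcal{C}}\bigl(Z\cap(S\cup\{\diamond\})\bigr)\cup\res_{\mathcal{D}}(Z\cap T)\qquad\text{for all }Z\subseteq S'.$$

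Next I would establish, by a direct two-step computation (no induction is needed, since strong $2$-simulation is a single equation per input), that $\res_{\mathcal{A}}^{2}(X)=f(X)$ for every $X\subseteq S$. Fix $X\subseteq S$. As $X\cap T=\emptyset$, $X\cap(S\cup\{\diamond\})=X$, and every reaction in $D$ has a nonempty reactant set (so $\res_{\mathcal{D}}(\emptyset)=\emptyset$), the displayed identity gives $\res_{\mathcal{A}}(X)=\res_{\mathcal{C}}(X)\subseteq T$. The key point is that $\diamond\in\res_{\mathcal{C}}(X)$ for every $X\subseteq S$: if $X\neq\emptyset$ this holds because $\res_{\mathcal{C}}(X)=T\backslash\{N_X\}$ by Lemma~\ref{0707a}, and if $X=\emptyset$ it holds because $(\emptyset,\{\diamond\},\{\diamond\})$ is the only reaction in $C$ enabled by $\emptyset$, whence $\res_{\mathcal{C}}(\emptyset)=\{\diamond\}$.

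Applying the displayed identity once more to $Z=\res_{\mathcal{C}}(X)\subseteq T$, and noting that $Z\cap(S\cup\{\diamond\})=Z\cap\{\diamond\}=\{\diamond\}$, I obtain
$$\res_{\mathcal{A}}^{2}(X)=\res_{\mathcal{C}}(\{\diamond\})\cup\res_{\mathcal{D}}\bigl(\res_{\mathcal{C}}(X)\bigr).$$
Here $\res_{\mathcal{C}}(\{\diamond\})=\emptyset$, since $(\emptyset,\{\diamond\},\{\diamond\})$ is now inhibited by $\diamond$ and no other reaction in $C$ has its (singleton, $S$-valued) reactant set contained in $\{\diamond\}$; and $\res_{\mathcal{D}}(\res_{\mathcal{C}}(X))=(\res_{\mathcal{D}}\circ\res_{\mathcal{C}})(X)=f(X)$ by Lemma~\ref{0707a}. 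Hence $\res_{\mathcal{A}}^{2}(X)=f(X)$, so $f$ is strongly $2$-simulated by $\mathcal{A}$, which proves the theorem.

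I expect the main obstacle to be exactly the verification that the second application of $\res_{\mathcal{A}}$ produces nothing outside $S$: strong $2$-simulation demands $\res_{\mathcal{A}}^{2}(X)\subseteq S$ literally, whereas the construction behind Theorem~\ref{0407b} generally leaves stray $T$-symbols in the state after two steps. This is precisely the purpose of the self-inhibiting reaction $(\emptyset,\{\diamond\},\{\diamond\})$: it forces $\diamond$ into the state after one step, which then switches off the ``naming'' subsystem $\mathcal{C}$ on the second step, so that only $\mathcal{D}$ fires. As a sanity check one should also trace through the degenerate input $X=\emptyset$ (covered above via $\res_{\mathcal{C}}(\emptyset)=\{\diamond\}$ and the reaction $(\{\diamond\},\{\ast\},f(\emptyset))$ of $\mathcal{D}$), including the subcase $f(\emptyset)=\emptyset$, where that reaction is absent but $\res_{\mathcal{D}}(\{\diamond\})=\emptyset=f(\emptyset)$ regardless, so the desired equality still holds.
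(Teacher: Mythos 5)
Your proposal is correct and follows essentially the same route as the paper's proof: the same fixed $S'=S\cup T$, the same reaction system $\mathcal{A}=(S',C\cup D)$ built from Lemma~\ref{0707a}, and the same direct two-step computation via Lemma~\ref{2612a}, hinging on $\diamond\in\res_{\mathcal{C}}(X)$ and $\res_{\mathcal{C}}(\{\diamond\})=\emptyset$. Your explicit treatment of $X=\emptyset$ and of the subcase $f(\emptyset)=\emptyset$ is a welcome extra check but does not change the argument.
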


\begin{proof}
Let $T=\{\,N_X \mid \emptyset \neq X\subseteq S  \,\}\cup \{\ast, \diamond\}$ as in Lemma~\ref{0707a} and $S'= S\cup T$. Then $S'$ is a fixed background set extending $S$. Suppose $f$ is an $rs$ function over $S$.
Let $\mathcal{C}$ and $\mathcal{D}$ be as in Lemma~\ref{0707a}. Consider the reaction system $\mathcal{A}= (S', C\cup D) $ over $S'$. Clearly, $\mathcal{A}$ is minimal.

Suppose $X\subseteq S$.
By Lemma~\ref{2612a},
$$\res_{\mathcal{A}}(X)
=\res_{\mathcal{C}}(X \cap  (S \cup \{\diamond\}  ) ) \cup
\res_{\mathcal{D}}(X \cap T )= \res_{\mathcal{C}}(X  ) \cup 
 \res_{\mathcal{D}}(\emptyset)=  \res_{\mathcal{C}}(X  ).$$
Hence, by Lemma~\ref{2612a} again,
$$\res_{\mathcal{A}  }^2(X)
=\res_{\mathcal{A}  } (\res_{\mathcal{C}  }(X)   )=
\res_{\mathcal{C}}( \res_{\mathcal{C}}(X)  \cap (S\cup \{\diamond\}   ) )
\cup  \res_{\mathcal{D}}( \res_{\mathcal{C}}(X)  \cap T    ).$$
Note that $\res_{\mathcal{C}}(X)  \cap (S\cup \{\diamond\} )$ equals $\{\diamond\}$ because $\res_{\mathcal{C}}(X)\subseteq T$ and $\diamond \in \res_{\mathcal{C}}(X)$. It follows that
$\res_{\mathcal{C}}( \res_{\mathcal{C}}(X)  \cap (S\cup \{\diamond\}   )  )=\res_{\mathcal{C}}( \{\diamond\})=\emptyset$. Therefore,
$\res_{\mathcal{A}}^2(X)
=\res_{\mathcal{D}}( \res_{\mathcal{C}}(X)  )=  f(X)$ by Lemma~\ref{0707a}.
\end{proof}

Finally, we address a question related to Lemma~\ref{060118a}.
When $\vert S'\vert=\vert S\vert+l$,
the lemma identifies certain $rs$ functions that cannot be $k$-simulated by any reaction system over $S'$ whenever $k> \frac{2^{\vert S\vert +l}-2}{2^{\vert S\vert}-2} >2^l$. Does any of those $rs$ functions behave identically for some $k\leq 2^l$? The following theorem answers this negatively.

\begin{theorem}
	Suppose $\vert S'\vert =\vert S\vert+l$ for a nonnegative integer $l$. Then every $rs$ function over $S$ can be strongly $k$-simulated by some reaction system over $S'$ whenever $k\leq 2^l$.
\end{theorem}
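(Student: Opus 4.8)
The plan is to use the $l$ extra elements of $S'$ as a clock with period $k$, so that a reaction system over $S'$ can hold a given $X\subseteq S$ essentially frozen for $k-1$ steps and then apply $f$ on the $k$-th step. Write $S'=S\cup E$ with $\vert E\vert=l$ and $E\cap S=\emptyset$. Since $k\leq 2^{l}=2^{\vert E\vert}$, fix $k$ pairwise distinct subsets $c_0,c_1,\dots,c_{k-1}$ of $E$ with $c_0=\emptyset$ (so $c_1,\dots,c_{k-1}$ are nonempty), and say that a state is \emph{at phase $i$} when its $E$-part (its intersection with $E$) equals $c_i$. The crucial device is \emph{phase-tagging}: if a reaction has $c_i$ inside its reactant set and $E\setminus c_i$ inside its inhibitor set, then it can be enabled by a state whose $E$-part is $c_j$ only when $c_i=c_j$ (enablement forces both $c_i\subseteq c_j$ and $c_j\subseteq c_i$, and the $c_i$ are distinct), while at phase $i$ the tag imposes no constraint at all. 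Thus reactions can be made to read and write the clock.

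Concretely I would take $\mathcal{A}=(S',A)$, where $A$ is built from three families. First, for every $i$ with $0\leq i\leq k-2$ and every $s\in S$, a \emph{copy} reaction $(\{s\}\cup c_i,\ E\setminus c_i,\ \{s\})$; at phase $i$ these reproduce the whole $S$-part. Second, for every such $i$, a \emph{clock} reaction $(c_i,\ E\setminus c_i,\ c_{i+1})$, whose reactant is independent of the $S$-part, so the clock advances from phase $i$ to phase $i+1$ even when the $S$-part is empty. Third, for every $Z\subseteq S$ with $f(Z)\neq\emptyset$, an \emph{update} reaction $(Z\cup c_{k-1},\ (S\setminus Z)\cup(E\setminus c_{k-1}),\ f(Z))$; this is just the canonical maximally inhibited specification of $f$, tagged with phase $k-1$, so that at phase $k-1$ exactly the reaction with $Z$ equal to the current $S$-part fires. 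One checks that all three families consist of legitimate reactions (reactant and inhibitor sets disjoint), and that no copy or clock reaction carries the phase-$(k-1)$ tag, so at phase $k-1$ the $S$-part is overwritten by $f$ of it while the clock resets to $c_0=\emptyset$.

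It then remains to verify that, starting from any $X\subseteq S$ (which is at phase $0$, its $E$-part being $\emptyset=c_0$), the trajectory under $\res_{\mathcal{A}}$ runs $X,\ X\cup c_1,\ X\cup c_2,\ \dots,\ X\cup c_{k-1},\ f(X)$, giving $\res_{\mathcal{A}}^{\,k}(X)=f(X)$ for all $X\subseteq S$, which is precisely strong $k$-simulation. Each step of this verification is a direct application of the phase-tagging observation together with $c_i\cap S=\emptyset$, and the same argument subsumes the degenerate cases $l=0$ (forcing $k=1$, where $A$ is just the canonical reaction system for $f$) and $k=1$ in general. The one point needing genuine care---and the main obstacle---is \emph{non-interference}: one must confirm that at each phase $i\leq k-2$ the update reactions and the copy and clock reactions carrying other phase tags are all disabled, and that at phase $k-1$ no copy or clock reaction is enabled. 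This is exactly where it matters that each reaction is tagged simultaneously with $c_i$ in its reactant set \emph{and} $E\setminus c_i$ in its inhibitor set: with only one of the two halves, a reaction could fire at a wrong phase (for instance whenever two of the clock values happen to be nested), and the simulation would break.
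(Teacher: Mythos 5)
Your proof is correct and follows essentially the same route as the paper: use $k$ distinct subsets of the $l$ extra elements as a clock, copy the $S$-part forward for $k-1$ steps, and fire the canonical maximally inhibited reactions for $f$ at the final tick. The only difference is cosmetic: the paper anchors the last clock value at the full set $T=S'\setminus S$, which lets it use lighter copy reactions $(\{s\},\{t\},\{s\})$ and update reactions $(X\cup T, S\setminus X, f(X))$ in place of your fully phase-tagged versions, but the non-interference argument is the same.
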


\begin{proof}
	Suppose $f$ is an arbitrary $rs$ function over $S$. Since $f$ can be canonically specified by a unique maximally inhibited reaction system over $S$, the case $k=1$ is trivial.	
	Suppose $1<k\leq 2^l$. Let $T=S'\backslash S$.
 Note that $\vert 2^{T}\vert =2^l\geq k$.
	Let $\emptyset =L_1, L_2, \dotsc, L_k=T$ be any $k$ distinct subsets of $T$. 
	Consider the reaction system $\mathcal{B}=(T,B)$, where 
	\begin{multline*}
	B= \{\, (X\cup T, S\backslash X, f(X))\mid X\subseteq S\,\} \,\cup \\
	\{\, (L_i, T\backslash L_i, L_{i+1})\mid 1\leq i\leq k-1   \,\}\cup
	\bigcup_{t\in T} \{ \,(\{s\}, \{t\}, \{s\})\mid s\in S \, \}.
	\end{multline*}
	We claim that $f$ can be strongly $k$-simulated by $\mathcal{B}=(S',B)$.
	
	Suppose $X$ is an arbitrary subset of $S$. It suffices to show that $\res_{\mathcal{B}}^i (X)= X\cup L_{i+1}$ for all $0\leq i\leq k-1$ because then $\res_{\mathcal{B}}^k (X)=\res_{\mathcal{B}}(\res_{\mathcal{B}}^{k-1}(X) )= \res_{\mathcal{B}}( X\cup T)=    f(X)$.
	We argue by induction. Trivially, 
	$\res_{\mathcal{B}}^0(X)=X= X\cup L_1$. 
	For the induction step, suppose  $1\leq i\leq k-1$. Then  
	$\res_{\mathcal{B}}^{i} (X)=\res_{\mathcal{B}}(\res_{\mathcal{B}}^{i-1}(X) )= \res_{\mathcal{B}}(X\cup L_{i}  ) $ by the induction hypothesis. Since $L_{i}\neq T$, it follows that
$\res_{\mathcal{B}}(X\cup L_{i}  )= X\cup L_{i+1}$.	
\end{proof}

\section{Conclusion and Open Problems}

The \emph{reaction system rank} of any $rs$ function $f$ over $S$ is the smallest possible size of a set $A$ of reactions such that  
$f$ can be specified by the reaction system $(S,A)$.
Through Theorem~\ref{0403a}, it can be shown that the number of extra resources needed to simulate a given $rs$ function $f$ by some strictly minimal reaction systems is bounded by the reaction system rank of $f$. However, since the upper bound $2^{\vert S\vert}$ for reaction system rank is effectively attainable by $rs$ functions over $S$ (see \cite{teh2017irreducible}), in view of Theorem~\ref{0407b} and Theorem~\ref{120219}, it is intriguing but not surprising if no fixed $S'\supseteq S$ exists with $\vert S' \vert<\vert S\vert+2^{\vert S\vert}$ such that every $rs$ function over $S$ can be $2$-simulated by some strictly minimal reaction system 
 over $S'$. 

In another direction, one can study the difference between $k$-simulation and strong $k$-simulation. With respect to Theorem~\ref{0603a},
one can ask whether 
the class of simulating reaction systems can be further restricted to the ones that are strictly minimal. Additionally, would an extended background set $S'$ of size $\vert S\vert+2^{\vert S\vert}$ be sufficient to strongly $2$-simulate every $rs$ function over $S$ by some minimal reaction system over $S'$? 
If either question has a negative answer, then this would mean that strong $k$-simulation is essentially weaker than $k$-simulation in terms of generative power.

As a conclusion, simulation of reaction systems and its strong version
can be further studied and compared from the following perspectives:
\begin{enumerate}
\item the complexity of the simulating reaction system;
\item the relative size of the extended background set;
\item the order of $k$-simulation, that is, the value of $k$.
\end{enumerate}

Finally, every hybrid reaction system over $(S,T)$ can be viewed as a reaction system over $S\cup T$. Hence, it is not our intention to generalize the study of reaction systems by
 introducing hybrid reaction systems. It is simply natural and convenient to do so in this study.

\section*{Acknowledgment}
This work is an extension of that published in the proceedings paper \cite{teh2018note}. The first author acknowledges support of Fundamental Research Grant Scheme \linebreak  No.~203.PMATHS.6711644 of Ministry of Education, Malaysia, and Universiti Sains Malaysia. Furthermore, this work is completed during his sabbatical leave from \mbox{15 Nov 2018} to 14 Aug 2019, supported by Universiti Sains  Malaysia.

\vspace{5mm}



\begin{thebibliography}{10}

\bibitem{azimi2017steady}
S.~Azimi.
\newblock Steady states of constrained reaction systems.
\newblock {\em Theoret. Comput. Sci.}, 701:20--26, 2017.


\bibitem{ehrenfeucht2012minimal}
A.~Ehrenfeucht, J.~Kleijn, M.~Koutny, and G.~Rozenberg.
\newblock Minimal reaction systems.
\newblock In {\em Transactions on Computational Systems Biology XIV}, pages
  102--122. Springer, 2012.

\bibitem{ehrenfeucht2011functions}
A.~Ehrenfeucht, M.~Main, and G.~Rozenberg.
\newblock Functions defined by reaction systems.
\newblock {\em Internat. J. Found. Comput. Sci.}, 22(1):167--178, 2011.

\bibitem{ehrenfeucht2017reaction}
A.~Ehrenfeucht, I.~Petre, and G.~Rozenberg.
\newblock {\em Reaction Systems: A Model of Computation Inspired by the
  Functioning of the Living Cell}, chapter Chapter 1, pages 1--32.

\bibitem{ehrenfeucht2007reaction}
A.~Ehrenfeucht and G.~Rozenberg.
\newblock Reaction systems.
\newblock {\em Fundam. Inform.}, 75:263--280, 2007.

\bibitem{manzoni2014simple}
L.~Manzoni, D.~Pocas, and A.~E. Porreca.
\newblock Simple reaction systems and their classification.
\newblock {\em Internat. J. Found. Comput. Sci.}, 25(4):441--457, 2014.

\bibitem{salomaa2012functions}
A.~Salomaa.
\newblock Functions and sequences generated by reaction systems.
\newblock {\em Theoret. Comput. Sci.}, 466:87--96, 2012.

\bibitem{salomaa2014compositions}
A.~Salomaa.
\newblock Compositions of reaction systems.
\newblock {\em J. Autom. Lang. Comb.}, 19:279--290, 2014.

\bibitem{salomaa2014minimal}
A.~Salomaa.
\newblock Minimal reaction systems defining subset functions.
\newblock In {\em Computing with New Resources}, pages 436--446. Springer,
  2014.

\bibitem{salomaa2015two}
A.~Salomaa.
\newblock Two-step simulations of reaction systems by minimal ones.
\newblock {\em Acta Cybernet.}, 22:247--257, 2015.

\bibitem{salomaa2017minimal}
A.~Salomaa.
\newblock Minimal reaction systems: Duration and blips.
\newblock {\em Theoret. Comput. Sci.}, 682:208--216, 2017.


\bibitem{teh2018note}
W.~C. Teh.
\newblock A note on simulation of reaction systems by the minimal ones.
\newblock In {\em Journal of Physics: Conference Series}, volume 1132, page
  012064. IOP Publishing, 2018.

\bibitem{teh2018compositions}
W.~C. Teh.
\newblock Compositions of functions and permutations specified by minimal
  reaction systems.
\newblock {\em Internat. J. Found. Comput. Sci.}, 29(7):1165--1179, 2018.

\bibitem{teh2017irreducible}
W.~C. Teh and A.~Atanasiu.
\newblock Irreducible reaction systems and reaction system rank.
\newblock {\em Theoret. Comput. Sci.}, 666:12--20, 2017.

\bibitem{teh2017minimal}
W.~C. Teh and A.~Atanasiu.
\newblock Minimal reaction system revisited and reaction system rank.
\newblock {\em Internat. J. Found. Comput. Sci.}, 28(3):247--261, 2017.

\end{thebibliography}

\end{document}